\let\csname equation*\endcsname\relax
\let\csname endequation*\endcsname\relax
\newtheorem{theorem}{Theorem}
\newtheorem{lemma}{Lemma}
\begin{document}

\title{Quantum coherence via conditional entropy}

\author{Yunchao Liu$^1$, Qi Zhao$^1$ and Xiao Yuan$^{2,1}$}

\address{$^1$ Center for Quantum Information, Institute for Interdisciplinary Information Sciences, Tsinghua University, Beijing 100084, China}
\address{$^2$ Department of Materials, University of Oxford, Parks Road, Oxford OX1 3PH, United Kingdom}

\ead{yxbdwl@gmail.com}

\begin{abstract}
Quantum coherence characterizes the non-classical feature of a single party system with respect to a local basis. Based on a recently introduced resource framework, coherence can be regarded as a resource and be systematically manipulated and quantified.
Operationally, considering the projective measurement of the state in the computational basis, coherence quantifies the intrinsic randomness of the measurement outcome conditioned on all the other quantum systems. {However, such a relation is only proven {when randomness is characterized by} the Von-Neumann entropy.} In this work, we consider several recently proposed coherence measures and relate them to the general uncertainties of the projective measurement outcome conditioned on all the other systems. Our work thus provides a unified framework for {redefining} several coherence measures via general conditional entropies.
Based on the relation, we numerically calculate the coherence measures via semi-definite programming. Furthermore, we discuss the operational meaning of the unified definition. Our result highlights the close relation between single partite coherence and bipartite quantum correlation.
\end{abstract}

\noindent{\it Keywords\/}: quantum coherence, conditional entropy, semi-definite programming


\maketitle

\section{Introduction}
Quantum coherence is one of the most fundamental non-classical features of quantum systems \cite{Aharonov67,Kitaev04,Bartlett07}. Considering a $d-$dimensional Hilbert space $\mathcal{H}$ in a computational basis $\mathbb{I}= \{\ket{i}\}_{i=0,\cdots, d-1}$, coherence characterizes the superposition property between different basis states. Given a pure state $\rho = \ket{\psi}\bra{\psi}$, where $\ket{\psi} = \sum_i a_i\ket{i}$ and $\sum_{i}|a_i|^2=1$, the amount of coherence $C(\rho)$ can be defined by the {Shannon entropy of the probability distribution $\{|a_i|^2\}$,  or alternatively $S(\Delta(\rho))$} with $\Delta(\rho) = \sum_{i}\ket{i}\bra{i}\rho\ket{i}\bra{i}$ being the dephased state of $\rho$ and $S(\rho) = -\Tr[\rho\log_2\rho]$ being the Von Neumann entropy. {From the perspective of quantum random number generation, quantum superposition in a local basis allows us to extract true randomness (i.e. secure random bits) from a quantum state by performing measurement in that basis. It is shown that the randomness of a pure state $\ket{\psi}$ with respect to the $\mathbb{I}$ basis is also given by $S(\Delta(\ket{\psi}\bra{\psi}))$ \cite{Ma2016QRNG, herrero2017}.} Thus, coherence not only characterizes the superposition in the computational basis but also quantifies the randomness or uncertainty if we measure the state in the same basis.
This definition cannot be directly generalized to mixed states as $C(\rho) = S(\Delta(\rho))$ by naively considering the whole randomness of measurement outcome. Such a definition will lead to a contradiction that the maximally mixed state even has the maximal amount of coherence $S\left(\frac{1}{d}\sum_i \ket{i}\bra{i}\right) = \log_2 d$. In general, if a state is only a mixture of the basis states, such as $\delta=\sum_{i}\delta_i\ket{i}\bra{i}$, it can be prepared in a classical way without involving any coherence or superposition, and hence produces no true randomness by measuring it in the $\mathbb{I}$ basis. Therefore, the coherence of $\delta$ should be zero in the $\mathbb{I}$ basis and we need a more sophisticated definition of coherence for general mixed states.

On the one hand, coherence frameworks \cite{aberg2006quantifying, baumgratz2014quantifying} are proposed  by considering coherence as a resource of superposition. Focusing on the $\mathbb{I}$ basis, we define {\it incoherent state} as $\delta=\sum_{i}\delta_i\ket{i}\bra{i}$ and consider it as zero resource state{, i.e. state with zero coherence}. The incoherent state set is denoted by $\mathcal{I}=\{\delta|\delta=\sum_i\delta_i\ket{i}\bra{i}\}$. Furthermore, the \emph{incoherent operation} is defined as a physically realizable operation that only transforms incoherent states to incoherent states. Based on the definitions of incoherent state and incoherent operation, a general resource theory of coherence is completed by defining coherence measures as a real-valued function $C(\rho)$ that satisfies several requirements. The advantage of this framework is that coherence can be studied in an abstract and systematical way. Coherence measures can be mathematically proposed and proven, such as the relative entropy of coherence \cite{baumgratz2014quantifying}, the coherence of formation \cite{aberg2006quantifying, yuan2015intrinsic}, and the robustness of coherence \cite{Napoli16}.
Furthermore, the {resource theory of coherence} supplies a basic framework for studying its relationship with general quantum correlations \cite{Yao15,Streltsov15,Streltsov16, ma16, Chitambar16,Hu17relative,PhysRevX.7.011024,yuan2017unified} and probing general coherence properties \cite{yu2016measure, Yu16Alternative,PhysRevX.6.041028,Shi17, hu2017maximum,shi2017coherence, Streltsov17structure,Zhou17,Yang2017}.
We refer to Ref.~\cite{streltsov2016quantum,Hu2017arXivreview} for reviews of recent developments of the resource theory of coherence.

On the other hand, coherence can be understood as the \emph{intrinsic randomness}  by measuring the state in the computational basis \cite{yuan2015intrinsic, yuan2016interplay, hayashi2017secure, Yuan17uncertainty, Luo17uncertainty, ma2017source}. Denote the system we study as $A$, then the coherence of $\rho_A$ quantifies the intrinsic randomness   of the measurement outcome in the computational basis. More rigorously, the intrinsic randomness  is defined as the unpredictable randomness {which is secure from the attack by} any other systems. Specifically, considering a system $E$ that holds the maximal information of $A$, i.e., a purification of state $\rho_A$, the intrinsic randomness  is given by the uncertainty of the measurement outcome conditioned on the information of system $E$.

In previous works, the uncertainty is characterized by the conditional Von-Neumann entropy. Here, we generalize the results to general conditional entropies. We consider {coherence measures including} the relative entropy of coherence \cite{baumgratz2014quantifying}, coherence of formation \cite{yuan2015intrinsic}, geometric coherence \cite{Streltsov15}, two {recently proposed} coherence measures \cite{Rastegin16, zhao2017one}, and a new coherence measure---{the min-entropy of coherence}. Then, we redefine them in a unified picture by {relating to} general conditional entropies. In the following, we will first review the coherence framework and coherence measures in Sec.~\ref{Sec:framework}. Then, we rephrase the coherence measures via the corresponding conditional entropy in Sec.~\ref{Sec:conditional}. In Sec.~\ref{Sec:SDP}, we propose a numerical method to calculate the coherence measures via semi-definite programming. We discuss the operational meaning of the new definitions in Sec.~\ref{Sec:operation} and summarize in Sec.~\ref{Sec:summary}.

\section{Coherence measures}\label{Sec:framework}
In this work, we mainly focus on the resource framework of coherence proposed in Ref.~\cite{baumgratz2014quantifying}.

\subsection{Coherence framework}
Considering a $d-$dimensional Hilbert space $\mathcal{H}$ with a computational basis $\{\ket{i}\}_{i=0,\cdots, d-1}$, the set of \emph{incoherent states} is defined by
\begin{equation}\label{incoherent}
  \mathcal{I}=\{\delta|\delta=\sum_{i=0}^{d-1}\delta_i\ket{i}\bra{i}\}.
\end{equation}
Meanwhile, the maximally coherent state is defined as
\begin{equation}\label{maximally}
  \ket{\psi_d}=\frac{1}{\sqrt{d}}\sum_{i=0}^{d-1}e^{i\phi_i}\ket{i},
\end{equation}
where $\phi_i$ is an arbitrary phase on basis $\ket{i}$.
Incoherent states can be understood as free states; while incoherent operations is similarly defined as free operations. In general, there are several different definitions of incoherent operations \cite{chitambar2016comparison}. In this work, we focus on the \emph{Incoherent Operation} (IO) proposed in Ref.~\cite{baumgratz2014quantifying}, which is defined as a CPTP map $\Lambda(\rho)=\sum_nK_n\rho K_n^\dag$ such that $K_n\delta K_n^\dag/\Tr[K_n\delta K_n^\dag]\in\mathcal{I}$ for all $n$ and $\delta\in\mathcal{I}$. Meanwhile, we also consider the \emph{Maximal Incoherent Operation} (MIO) \cite{aberg2006quantifying} that is defined as a CPTP map $\Lambda(\rho)$ such that $\Lambda(\delta)\in\mathcal{I}$ for all $\delta\in\mathcal{I}$. It follows from the definition that IO $\subseteq$ MIO.

In general, the amount of quantum coherence of a state $\rho$ is characterized by a nonnegative real-valued function $C(\rho)$ which satisfies the following properties:
\begin{enumerate}[(C1)]
	\item $C(\rho)\ge0,\forall \rho$ and $C(\delta) = 0$ iff $\delta\in\mathcal{I}$;
	\item Monotonicity: coherence cannot increase under MIO or IO map $\Lambda$, i.e., $C(\Lambda(\rho)) \le C(\rho)$;
	\item Strong monotonicity (with post-selection): {for any $\Lambda\in\text{IO}$ with Kraus operators $\{K_n\}$}, coherence cannot increase on average under post-selection, i.e., $\sum_n p_nC(\rho_n) \le C(\rho)$, where $\rho_n = {K_n\rho K_n^\dag}/{\mathrm{Tr}\left[ K_n\rho K_n^\dag\right]}$;
	\item Convexity: coherence cannot increase by mixing quantum states,  i.e., $C\left(\sum_n p_n\rho_n\right) \le \sum_n p_nC(\rho_n)$.
\end{enumerate}
Note that the strong monotonicity requirement (C3) is defined only for IO as the the Kraus operators of MIO are not explicitly specified by definition. In general, when $C(\rho)$ satisfies (C1) and either (C2) or (C3)  but not (C4), we still consider it as a coherence monotone as it may still play useful role in a practical task.

A general way of constructing coherence measures is by minimizing a distance function over all incoherent states, that is
\begin{equation}\label{distance}
  C(\rho)=\min_{\delta\in\mathcal{I}}D(\rho,\delta),
\end{equation}
where $D(\rho,\delta)$ is a distance function satisfying $D(\rho,\delta)=0$ if and only if $\rho=\delta$ \footnote{Note that we do not require the triangle inequality for the distance function by following the convention from Ref.~\cite{baumgratz2014quantifying}.}. For instance, considering the relative entropy as the distance function $D(\rho,\sigma)=S(\rho||\sigma) = \Tr[\rho\log_2\rho]-\Tr[\rho\log_2\sigma]$, we can define the relative entropy of coherence \cite{baumgratz2014quantifying},
\begin{equation}
	C_r(\rho)=\min_{\delta\in\mathcal{I}}S(\rho||\delta) = S(\Delta(\rho))-S(\rho).
\end{equation}
The relative entropy of coherence measures the asymptotic rate of coherence distillation under IO \cite{Winter16}. We refer to Appendix A for more information about asymptotic and one-shot coherence conversion.
Suppose the distance function is defined by $D(\rho,\sigma)=1-F(\rho, \sigma)$ where $F(\rho,\sigma)=\left(\Tr\left[\sqrt{\sqrt{\rho}\sigma\sqrt{\rho}}\right]\right)^2$ is the fidelity between $\rho$ and $\sigma$, we obtain the geometric coherence \cite{streltsov2015measuring},
\begin{equation}
	C_g(\rho)=\min_{\delta\in\mathcal{I}}\left(1-F(\rho, \delta)\right).
\end{equation}
Furthermore, considering the distance function as the max and min quantum Renyi divergence $D(\rho, \sigma) = D_{\max}(\rho||\sigma)=\log_2\min\{\lambda\in\mathbb{R}|\rho\leq\lambda\sigma\}$ and $D_{\min}(\rho||\sigma)=-\log_2 F(\rho,\sigma)$, respectively, we can define the max and min-entropy of coherence $C_{\max}(\rho)$ and $C_{\min}(\rho)$  as
\begin{equation}\label{cmaxcmin}
\begin{aligned}
	C_{\max}(\rho)&=\min_{\delta\in\mathcal{I}}D_{\max}(\rho||\delta),\\
	C_{\min}(\rho)&=\min_{\delta\in\mathcal{I}}D_{\min}(\rho||\delta).
\end{aligned}
\end{equation}
Here the max-entropy of coherence is used in Ref.~\cite{zhao2017one} to characterize the one-shot coherence dilution under MIO and the min-entropy of coherence is a new coherence measure that characterizes one-shot coherence distillation under IO \footnote{This work is under preparation.}.

Alternatively, another way of defining coherence measure is via the convex-roof construction.  For instance, the coherence of formation $C_f(\rho)$ \cite{aberg2006quantifying,yuan2015intrinsic}  can be defined by
\begin{equation}
\begin{aligned}
C_{f}(\rho)&\equiv\min_{\{p_j,\ket{\psi_j}\}}\sum_jp_jS(\Delta(\ket{\psi_j}\bra{\psi_j})).\\
\end{aligned}
\end{equation}
Here, the minimization is over all possible decomposition of the state $\rho$ as $\rho = \sum_j p_j\ket{\psi_j}\bra{\psi_j}$.
Coherence of formation measures the asymptotic rate of coherence dilution under IO \cite{Winter16}.
Suppose $\ket{\psi_j}=\sum_{i=0}^{d-1}a_{ij}\ket{i}$ and denote $T_j$ to be the number of nonzero elements in $\{a_{0j},\cdots,a_{d-1,j}\}$, we can also define another coherence monotone $C_0(\rho)$ as
\begin{equation}
	C_0(\rho)=\min_{\{p_j,\ket{\psi_j}\}}\max_{j}\log_2 T_j.
\end{equation}
The coherence monotone $C_0(\rho)$ measures the one-shot coherence dilution rate under IO \cite{zhao2017one}.

\subsection{Properties of coherence measures}
In this work, we focus on the six coherence measures $C_r$, $C_g$, $C_{\max}$, $C_{\min}$,  $C_f$, and $C_0$. The proofs that they satisfy the properties (C1) --- (C4) can be found in the corresponding references. Here, we only prove that the new coherence measure $C_{\min}$ satisfies the properties {(C1), (C2) and (C4)}.

\begin{theorem}
  The function $C_{\min}(\rho)$ is a coherence monotone satisfying (C1), (C2) and (C4).
\end{theorem}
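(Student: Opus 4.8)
The plan is to derive the three properties from standard facts about the fidelity and about the geometric coherence $C_g$, which is already known to satisfy (C1)--(C4). Write $F^\ast(\rho):=\max_{\delta\in\mathcal I}F(\rho,\delta)$; the maximum is attained since $\mathcal I$ is compact and $F$ continuous, and $C_{\min}(\rho)=-\log_2 F^\ast(\rho)=-\log_2\bigl(1-C_g(\rho)\bigr)$. Because $\mathds{1}/d\in\mathcal I$ has full support, $F^\ast(\rho)\ge F(\rho,\mathds{1}/d)>0$, so $C_{\min}(\rho)$ is finite and the logarithm is well defined; this is used implicitly throughout.

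For (C1) and (C2) I would transfer the corresponding properties of $C_g$ through the map $x\mapsto-\log_2(1-x)$, which is nonnegative and strictly increasing on $[0,1)$. Nonnegativity of $C_{\min}$ and the faithfulness $C_{\min}(\rho)=0\iff\rho\in\mathcal I$ follow from $0\le C_g(\rho)<1$ together with $C_g(\rho)=0\iff\rho\in\mathcal I$; monotonicity $C_{\min}(\Lambda(\rho))\le C_{\min}(\rho)$ for every $\Lambda\in\mathrm{MIO}$, and hence for $\Lambda\in\mathrm{IO}$ since $\mathrm{IO}\subseteq\mathrm{MIO}$, follows from $C_g(\Lambda(\rho))\le C_g(\rho)$. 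A self-contained proof of (C2) is equally short: if $\delta^\star\in\mathcal I$ attains $F^\ast(\rho)$ then $\Lambda(\delta^\star)\in\mathcal I$, and monotonicity of the fidelity under CPTP maps gives $F^\ast(\Lambda(\rho))\ge F(\Lambda(\rho),\Lambda(\delta^\star))\ge F(\rho,\delta^\star)=F^\ast(\rho)$.

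For convexity (C4) the key observation is that $F^\ast=1-C_g$ is concave, which is nothing but convexity of $C_g$; then $C_{\min}=-\log_2 F^\ast$ is the composition of the convex, non-increasing function $-\log_2$ with a concave function, hence convex. If one prefers not to invoke convexity of $C_g$, concavity of $F^\ast$ can be obtained directly: for $\rho=\sum_n p_n\rho_n$ pick $\delta_n^\star\in\mathcal I$ optimal for each $\rho_n$; then $\sum_n p_n\delta_n^\star\in\mathcal I$ by convexity of $\mathcal I$, and joint concavity of the fidelity gives $F^\ast(\rho)\ge F\bigl(\sum_n p_n\rho_n,\sum_n p_n\delta_n^\star\bigr)\ge\sum_n p_n F(\rho_n,\delta_n^\star)=\sum_n p_n F^\ast(\rho_n)$ (if one wants to stay with the square-root fidelity, run the same argument for $\sqrt{F^\ast}$ and use $C_{\min}=-2\log_2\sqrt{F^\ast}$).

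The main obstacle is precisely (C4), and more specifically the temptation to argue convexity of $C_{\min}$ by composing the increasing map $x\mapsto-\log_2(1-x)$ with the convex functional $C_g$: that outer function is convex, not concave, so Jensen's inequality would run the wrong way and the argument collapses. The correct route is to note that $F^\ast$ itself is concave and that $-\log_2$ is convex and decreasing, so $-\log_2\circ\,F^\ast$ is convex. Everything else---boundedness and faithfulness of the fidelity, the data-processing inequality, joint concavity---is routine and available in the references cited above.
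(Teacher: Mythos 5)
Your treatment of (C1) and (C2) is sound and essentially the paper's own argument: (C1) is faithfulness of the fidelity, and your self-contained (C2), using $\Lambda(\delta^\star)\in\mathcal{I}$ together with the data-processing inequality for the fidelity, is the same reasoning the paper phrases through monotonicity of $D_{\min}$ under channels. For (C4), your primary route --- concavity of $F^\ast=1-C_g$ imported from the known convexity of $C_g$, then composing with the convex, non-increasing $-\log_2$ --- is logically valid and genuinely different from the paper; but note the trade-off: the paper's Appendix B proof is self-contained, using only the joint concavity of the \emph{square-root} fidelity plus concavity of $\log_2$, whereas convexity of $C_g$ is an imported and not-entirely-trivial fact (the paper merely cites it), and, as explained next, it cannot be rederived from any joint concavity of the squared fidelity.

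The genuine flaw is in your ``direct'' derivation of concavity of $F^\ast$: the squared fidelity $F(\rho,\sigma)=\bigl(\mathrm{Tr}\bigl[\sqrt{\sqrt{\rho}\sigma\sqrt{\rho}}\bigr]\bigr)^2$ is \emph{not} jointly concave; only $\sqrt{F}$ is. A counterexample in dimension $3$: take $\rho_1=\sigma_1=\ket{0}\bra{0}$, $\rho_2=\ket{1}\bra{1}$, $\sigma_2=\ket{\phi}\bra{\phi}$ with $\ket{\phi}=(\ket{1}+\ket{2})/\sqrt{2}$, and mix each pair with weight $1/2$; the mixed pair is block diagonal with matching blocks, so $\sqrt{F}$ is additive over blocks and equals $\tfrac12\bigl(1+\tfrac{1}{\sqrt{2}}\bigr)$, giving $F\approx 0.729 < 0.75=\tfrac12(F_1+F_2)$. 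Hence your step $F\bigl(\sum_n p_n\rho_n,\sum_n p_n\delta_n^\star\bigr)\ge\sum_n p_nF(\rho_n,\delta_n^\star)$ is unjustified, and this route only yields concavity of $\sqrt{F^\ast}$, not of $F^\ast$ (the square of a concave function need not be concave). Your parenthetical ``run the same argument for $\sqrt{F^\ast}$'' is therefore not a stylistic variant but the necessary repair, and it is exactly the paper's proof: $\sum_n p_nC_{\min}(\rho_n)=-2\sum_n p_n\log_2\sqrt{F}(\rho_n,\delta_n^\star)\ge -2\log_2\sqrt{F}\bigl(\sum_n p_n\rho_n,\sum_n p_n\delta_n^\star\bigr)\ge C_{\min}\bigl(\sum_n p_n\rho_n\bigr)$, using concavity of $\log_2$ and joint concavity of the square-root fidelity.
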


The proof of the theorem is left in Appendix B.
In addition, we show that there exists an relationship among the six coherence measures or monotones.
\begin{theorem}\label{relationship}
  For any quantum state $\rho$, we have $C_g(\rho)\leq C_{\min}(\rho)\leq C_r(\rho)\leq C_{\max}(\rho) \le  C_0(\rho)$ and $C_r(\rho)\leq C_f(\rho)\leq C_0(\rho)$.
\end{theorem}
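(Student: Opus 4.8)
The plan is to split the two stated chains into seven pairwise inequalities and to obtain each from one of three elementary ingredients: (i) the monotonicity of the quantum R\'enyi divergences in their order parameter, (ii) the structure of an optimal pure-state decomposition of $\rho$, and (iii) the scalar facts $\ln x\le x-1$ and $H(\{p_i\})\le\log_2|\{i:p_i>0\}|$. The starting point is the identification of the three relative-entropy-type quantities appearing in $C_{\min},C_r,C_{\max}$ as members of the sandwiched R\'enyi family: $D_{\min}(\rho\|\sigma)=-\log_2 F(\rho,\sigma)$ is the order-$1/2$ divergence, $S(\rho\|\sigma)$ is the order-$1$ divergence, and $D_{\max}(\rho\|\sigma)$ is the order-$\infty$ divergence. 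Since the sandwiched R\'enyi divergences are non-decreasing in the order, $D_{\min}(\rho\|\sigma)\le S(\rho\|\sigma)\le D_{\max}(\rho\|\sigma)$ for every pair $(\rho,\sigma)$.

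For the middle portion $C_g(\rho)\le C_{\min}(\rho)\le C_r(\rho)\le C_{\max}(\rho)$ I would use only the trivial observation that if $f(\delta)\le g(\delta)$ for all $\delta\in\mathcal{I}$ then $\min_{\delta\in\mathcal{I}}f(\delta)\le\min_{\delta\in\mathcal{I}}g(\delta)$. Taking $f(\delta)=D_{\min}(\rho\|\delta)$ and $g(\delta)=S(\rho\|\delta)$ gives $C_{\min}(\rho)\le C_r(\rho)$; taking $f(\delta)=S(\rho\|\delta)$ and $g(\delta)=D_{\max}(\rho\|\delta)$ gives $C_r(\rho)\le C_{\max}(\rho)$. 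For $C_g(\rho)\le C_{\min}(\rho)$ the same lemma applies after noting that $-\log_2 x\ge 1-x$ on $(0,1]$ (which follows from $-\ln x\ge 1-x$ and $\ln 2<1$), so that $D_{\min}(\rho\|\delta)=-\log_2 F(\rho,\delta)\ge 1-F(\rho,\delta)$ for every $\delta\in\mathcal{I}$.

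The remaining inequalities involve $C_0$ and $C_f$. For $C_r(\rho)\le C_f(\rho)$ I would use that for a pure state $S(\Delta(\ket{\psi}\bra{\psi}))=S(\Delta(\ket{\psi}\bra{\psi}))-S(\ket{\psi}\bra{\psi})=C_r(\ket{\psi}\bra{\psi})$, so that $C_f(\rho)=\min_{\{p_j,\ket{\psi_j}\}}\sum_j p_j C_r(\ket{\psi_j}\bra{\psi_j})$; convexity of $C_r$ (property (C4)) then gives $C_r(\rho)\le\sum_j p_j C_r(\ket{\psi_j}\bra{\psi_j})$ for every decomposition, hence $C_r(\rho)\le C_f(\rho)$ after the minimum. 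For the two bounds by $C_0$, fix an optimal decomposition $\rho=\sum_j p_j\ket{\psi_j}\bra{\psi_j}$ achieving $\max_j\log_2 T_j=C_0(\rho)$, and write $T:=2^{C_0(\rho)}=\max_j T_j$. For $C_f(\rho)\le C_0(\rho)$: since $\Delta(\ket{\psi_j}\bra{\psi_j})$ is a probability distribution on at most $T_j$ points, $S(\Delta(\ket{\psi_j}\bra{\psi_j}))\le\log_2 T_j\le\log_2 T$, so $\sum_j p_j S(\Delta(\ket{\psi_j}\bra{\psi_j}))\le\log_2 T=C_0(\rho)$, and $C_f(\rho)$, being a minimum, is no larger. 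For $C_{\max}(\rho)\le C_0(\rho)$: Cauchy--Schwarz over the support of $\ket{\psi_j}$ gives $|\langle\psi_j|v\rangle|^2\le T_j\langle v|\Delta(\ket{\psi_j}\bra{\psi_j})|v\rangle$ for all $\ket{v}$, i.e.\ the operator inequality $\ket{\psi_j}\bra{\psi_j}\le T_j\,\Delta(\ket{\psi_j}\bra{\psi_j})\le T\,\Delta(\ket{\psi_j}\bra{\psi_j})$; averaging with the weights $p_j$ and using linearity of $\Delta$ yields $\rho\le T\,\Delta(\rho)$, and since $\Delta(\rho)\in\mathcal{I}$ this gives $C_{\max}(\rho)=\min_{\delta\in\mathcal{I}}D_{\max}(\rho\|\delta)\le D_{\max}(\rho\|\Delta(\rho))\le\log_2 T=C_0(\rho)$.

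I expect the $C_{\max}(\rho)\le C_0(\rho)$ step to be the only one that needs an idea rather than bookkeeping: one has to produce the operator inequality $\ket{\psi}\bra{\psi}\le T\,\Delta(\ket{\psi}\bra{\psi})$ and, more importantly, recognize that the incoherent state witnessing the bound should be taken to be $\Delta(\rho)$ itself rather than an optimizer of $C_{\max}$. The other point requiring a little care is purely conventional, namely verifying that the paper's $D_{\min}$ coincides with the order-$1/2$ sandwiched R\'enyi divergence so that the order-monotonicity used in the middle chain applies verbatim; everything after that is routine.
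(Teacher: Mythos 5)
Your proof is correct. The middle chain $C_g\le C_{\min}\le C_r\le C_{\max}$ is handled exactly as in the paper: identify $D_{\min}$, the relative entropy and $D_{\max}$ with the sandwiched R\'enyi divergences at $\alpha=\tfrac12,1,\infty$, use monotonicity in $\alpha$, and minimize over $\delta\in\mathcal{I}$; for $C_g$ versus $C_{\min}$ the paper uses the exact identity $C_g(\rho)=1-2^{-C_{\min}(\rho)}$ together with $1-2^{-x}\le x$, while you apply the equivalent scalar bound $-\log_2 F\ge 1-F$ pointwise before minimizing --- same content. Where you genuinely diverge is in the three inequalities involving $C_f$ and $C_0$. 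The paper gets $C_{\max}\le C_0$, $C_r\le C_f$ and $C_f\le C_0$ by invoking operational characterizations: $C_{\max}$ and $C_0$ are one-shot dilution rates under MIO and IO with $\mathrm{IO}\subseteq\mathrm{MIO}$, and $C_r$, $C_f$ are asymptotic distillation and dilution rates under IO. You instead give direct structural arguments: $C_f$ is the convex roof of $C_r$ restricted to pure states, so convexity (C4) of $C_r$ gives $C_r\le C_f$; the dimension bound $S(\Delta(\ket{\psi_j}\bra{\psi_j}))\le\log_2 T_j$ gives $C_f\le C_0$; and the Cauchy--Schwarz operator inequality $\ket{\psi_j}\bra{\psi_j}\le T_j\,\Delta(\ket{\psi_j}\bra{\psi_j})$, averaged over an optimal decomposition and witnessed by $\delta=\Delta(\rho)\in\mathcal{I}$, gives $C_{\max}\le\log_2\max_j T_j=C_0$ (the attainment of the optimum is unproblematic since $\max_j\log_2 T_j$ takes finitely many values). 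Your route is more elementary and self-contained --- it does not lean on the cited distillation/dilution theorems, nor on the approximate nature of the one-shot MIO dilution characterization --- at the cost of a slightly longer argument; the paper's route is shorter and simultaneously advertises the operational meaning of the measures, which is the theme of the work.
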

\begin{proof}
  First note that $C_g(\rho)=1-2^{-C_{\min}(\rho)}$ from the definitions of $C_g$ and $C_{\min}$. Since $1-2^{-x}\leq x$ for all $x\geq 0$, we have $C_g(\rho)\leq C_{\min}(\rho)$. Next, we use the monotonicity of the generalized $\alpha$-Renyi divergence
  \begin{equation}\label{alpha}
    \tilde{D}_{\alpha}(\rho||\sigma)=\frac{1}{\alpha-1}\log_2\left(\Tr\left[\left(\sigma^{\frac{1-\alpha}{2\alpha}}\rho\sigma^{\frac{1-\alpha}{2\alpha}}\right)^{\alpha}\right]\right)
  \end{equation}
which is non-decreasing with respect to $\alpha$ \cite{2013MartinRenyi}. Since $D_{\min}(\rho||\sigma), D(\rho||\sigma)$ and $D_{\max}(\rho||\sigma)$ correspond to $\alpha\to\frac{1}{2},1,\infty$, respectively, we know that $\forall \rho,\sigma$,
\begin{equation}\label{ineq}
  D_{\min}(\rho||\sigma)\leq D(\rho||\sigma)\leq D_{\max}(\rho||\sigma).
\end{equation}
Taking the minimization over $\delta\in\mathcal{I}$ on each side, we obtain $C_{\min}(\rho)\leq C_r(\rho)\leq C_{\max}(\rho)$. Furthermore, since $\textrm{IO}\subseteq\textrm{MIO}$ and $C_{\max}(\rho)$ and $C_0(\rho)$ correspond to the one-shot dilution rate under MIO and IO, respectively \cite{zhao2017one}, we know that $C_{\max}(\rho)\leq C_0(\rho)$.

For the second inequality, according to Ref.~\cite{Winter16}, $C_r(\rho)$ and $C_f(\rho)$ correspond to the asymptotic coherence distillation and dilution rate under IO, which indicates that $C_r(\rho)\le C_f(\rho)$. As $C_0(\rho)$ corresponds to the one-shot dilution rate under IO, we thus have $C_f(\rho)\leq  C_0(\rho)$.

We also note that, at the moment, the relationship between $C_f(\rho)$ and $C_{\max}(\rho)$ is unknown.
\end{proof}

\section{Coherence measures via conditional entropies}\label{Sec:conditional}
In this section, we rephrase all the six coherence measures in an unified framework via conditional entropies.

\subsection{Conditioned on quantum information}
Denote the system we study as $A$ which acts on a $d_A$ dimensional Hilbert space $\mathcal{H}_A$, and the quantum state as $\rho_A$ that is from the state set $\mathcal{D}(\mathcal{H}_A)$. We also consider another system $E$ and a larger state $\ket{\psi}_{AE}$ that purifies $\rho_A$, that is, $\rho_A=\Tr_E[\ket{\psi}_{AE}\bra{\psi}_{AE}]$ with partial trace $\Tr_E$ over system $E$. As system $E$ purifies system $A$, it already holds the maximal information about system $A$ and hence state $\rho_A$.
As the coherence of $\rho_A$ is defined on the local basis $\mathbb{I}_A=\{\ket{i_A}\}_{i_A=0,\cdots, d_A-1}$, we consider a measurement on system $A$ on the basis $\mathbb{I}_A$. Equivalently, we can consider a dephasing channel and describe the state after the measurement on $\mathbb{I}_A$ as
\begin{equation}\label{Eq:rhoxe}
\begin{aligned}
	\rho_{X_{A}E} &= \Delta_A(\ket{\psi}_{AE}\bra{\psi}_{AE})\\
	& = \sum_{i}(\ket{i_A}\bra{i_A}\otimes I_E)\ket{\psi}_{AE}\bra{\psi}_{AE}(\ket{i_A}\bra{i_A}\otimes I_E).
\end{aligned}
\end{equation}
Here $I_E$ is the identity matrix on system $E$, $\rho_{X_{A}E}$ is a classical-quantum state, and $X_A$ denotes the system of classical outcomes for the measurement.
In the following, we will relate the six coherence measures to a general conditional entropy of the state $\rho_{X_{A}E}$.

First, we consider the relative entropy of coherence $C_r(\rho_{A})$ and the Von-Neumann conditional entropy $H(A|B)_{\rho_{AB}} = S(\rho_{AB}) - S(\rho_B)$. According to Ref.~\cite{coles2012unification,yuan2016interplay}, we can re-express $C_r(\rho_{A})$ as a conditional entropy $H(X_A|E)_{\rho_{X_{A}E}}$,
\begin{equation}\label{crbyh}
  C_r(\rho_A)=H(X_A|E)_{\rho_{X_{A}E}}.
\end{equation}
The operational meaning {of the above definition} is that $H(X_A|E)_{\rho_{X_{A}E}}$  measures the randomness of the measurement outcome $X_A$ conditioned on system $E$. As system $E$ holds the purification of system $A$ before the measurement, $C_r(\rho_A)$ thus describes the unpredictable randomness conditioned on all other quantum systems.

Now, we focus on $C_{\min}(\rho)$ and $C_{\max}(\rho)$ and the conditional min and max entropy defined as
\begin{equation}\label{Hminmax}
\begin{aligned}
H_{\min}(A|B)_{\rho_{AB}}&=\max\{\lambda\in\mathbb{R}|\exists\sigma_B\in\mathcal{D}(\mathcal{H}_B):\rho_{AB}\leq 2^{-\lambda}I_A\otimes\sigma_B\},\\
H_{\max}(A|B)_{\rho_{AB}}&=\max_{\sigma_B}\log_2 F(I_A\otimes \sigma_B,\rho_{AB}).
\end{aligned}
\end{equation}
We show that $C_{\min}(\rho)$, $C_{\max}(\rho)$ and  $C_g(\rho)$  can be defined by the conditional min and max entropy of $\rho_{X_{A}E}$.
\begin{theorem}\label{theorem:main}
  Let $\rho_A$ be a quantum state in $\mathcal{D}(\mathcal{H}_A)$ and let $\rho_{X_{A}E}$ be defined in Eq.~\eqref{Eq:rhoxe}, then we have
  \begin{equation}\label{coherenceeqcond}
  \begin{aligned}
  C_{\max}(\rho_A)&=H_{\max}(X_A|E)_{\rho_{X_{A}E}},\\
  C_{\min}(\rho_A)&=H_{\min}(X_A|E)_{\rho_{X_{A}E}},\\
  C_g(\rho_{A})&=1-2^{-H_{\min}(X_A|E)_{\rho_{X_{A}E}}}.
  \end{aligned}
  \end{equation}
\end{theorem}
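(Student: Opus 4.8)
The plan is to prove the three identities in Eq.~\eqref{coherenceeqcond} by unwinding the definitions and exploiting the well-known duality between conditional min-/max-entropy and distance measures under purification. First I would recall the key structural fact: since $\ket{\psi}_{AE}$ purifies $\rho_A$, the dephased state $\rho_{X_AE}$ defined in Eq.~\eqref{Eq:rhoxe} has a very rigid form, namely $\rho_{X_AE}=\sum_i p_i\,\ket{i}\bra{i}_{X_A}\otimes\sigma_E^{(i)}$ where $p_i=\bra{i_A}\rho_A\ket{i_A}$ and the $\sigma_E^{(i)}$ are (subnormalized) pure states on $E$. The strategy for each line is to show that optimizing the conditional entropy of $\rho_{X_AE}$ over the auxiliary state $\sigma_E$ is equivalent to optimizing the corresponding divergence $D_{\max}$ or $D_{\min}$ of $\rho_A$ over incoherent states $\delta\in\mathcal{I}$, i.e.\ I want a correspondence between feasible $\sigma_E$ on the $E$ side and feasible $\delta$ on the $A$ side that preserves the relevant quantity.

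Concretely, for the $C_{\max}$ line I would start from the definition $H_{\max}(X_A|E)=\max_{\sigma_E}\log_2 F(I_{X_A}\otimes\sigma_E,\rho_{X_AE})$ and use Uhlmann's theorem: a purification of $I_{X_A}\otimes\sigma_E$ (after dephasing on $X_A$) and the purification $\ket{\psi}_{AE}$ of $\rho_A$ are related so that the fidelity on $X_AE$ equals a fidelity on the $A$ side between $\rho_A$ and a state of the form $\Delta$-invariant $\delta$; more directly, one can invoke the known duality $H_{\max}(X_A|E)_{\rho}=-H_{\min}(X_A|A')_{\rho}$ for a purification on $A'$, and then identify $H_{\min}(X_A|A')$ of the dephased tripartite state with $-\log_2$ of the guessing/overlap quantity that equals $\min_\delta 2^{-D_{\max}(\rho_A\|\delta)}$. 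Equivalently, and perhaps cleanest, I would use the dephasing-map version of the semidefinite characterization: $C_{\max}(\rho_A)=\log_2\min\{\lambda:\rho_A\le\lambda\,\delta,\ \delta\in\mathcal I\}$, rewrite the constraint using that $\delta$ ranges over diagonal states, and match it term-by-term with the SDP for $H_{\max}(X_A|E)$ after tracing out $E$ — the point being that the $E$-register optimization over $\sigma_E$ collapses, because $\rho_{X_AE}$ is block-diagonal in $X_A$, to an optimization that only sees the diagonal of $\rho_A$, which is exactly the incoherent-state optimization. For the $C_{\min}$ line I would argue analogously from $H_{\min}(X_A|E)=\max\{\lambda:\exists\sigma_E,\ \rho_{X_AE}\le 2^{-\lambda}I_{X_A}\otimes\sigma_E\}$, again using the block structure to reduce the operator inequality on $X_AE$ to the scalar/diagonal inequality defining $D_{\min}$ (equivalently $-\log_2 F$) of $\rho_A$ against incoherent states; here Uhlmann's theorem enters to turn the fidelity $F(I\otimes\sigma_E,\rho_{X_AE})$ into $\max_\delta F(\rho_A,\delta)$. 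The third identity is then immediate by composing $C_{\min}(\rho_A)=H_{\min}(X_A|E)$ with the already-established algebraic relation $C_g(\rho)=1-2^{-C_{\min}(\rho)}$ noted in the proof of Theorem~\ref{relationship}, since $D_{\min}(\rho\|\delta)=-\log_2 F(\rho,\delta)$ directly gives $2^{-C_{\min}(\rho)}=\max_\delta F(\rho,\delta)$ and hence $C_g(\rho)=\min_\delta(1-F(\rho,\delta))=1-2^{-C_{\min}(\rho)}=1-2^{-H_{\min}(X_A|E)}$.

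I expect the main obstacle to be the rigorous justification that the optimization over the auxiliary $E$-system state $\sigma_E$ in the conditional-entropy definitions can be replaced, without loss, by an optimization over incoherent states $\delta\in\mathcal I$ on system $A$. This is where Uhlmann's theorem does the real work: one direction (every incoherent $\delta$ gives a feasible $\sigma_E$ achieving at least the corresponding value) is straightforward by constructing $\sigma_E$ from a purification of $\delta$ compatible with $\ket\psi_{AE}$; the converse direction (every $\sigma_E$ can be ``pulled back'' to an incoherent $\delta$ with no loss in fidelity/overlap) requires care because an arbitrary purification of $I_{X_A}\otimes\sigma_E$ need not be of the product form, and one must use the freedom in Uhlmann's theorem (local isometry on the purifying system) together with the fact that dephasing on $X_A$ commutes appropriately. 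I would handle this by phrasing everything in terms of the canonical purification and the established identities $H_{\max}(X_A|E)_\rho = -H_{\min}(X_A|A')_\rho$ and $H_{\min}(X_A|A')_\rho$ expressed via fidelity, reducing the problem to a clean fidelity-matching statement: $F\big(I_{X_A}\otimes\sigma_E,\ \Delta_A(\psi_{AE})\big)$ optimized over $\sigma_E$ equals $F\big(\rho_A,\delta\big)$ optimized over $\delta\in\mathcal I$. Once that lemma is in hand, all three equations follow by direct substitution into the definitions in Eq.~\eqref{cmaxcmin} and Eq.~\eqref{Hminmax}.
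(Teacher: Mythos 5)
Your high-level route (purify, dualize the conditional entropy to the complementary register, and pull the optimization over auxiliary states back to an optimization over incoherent states) is essentially the paper's, but the concrete lemma you reduce everything to at the end is false, and it pairs the wrong quantities. You claim that $\max_{\sigma_E}F\bigl(I_{X_A}\otimes\sigma_E,\ \rho_{X_AE}\bigr)=\max_{\delta\in\mathcal{I}}F(\rho_A,\delta)$ and that all three identities follow from this. By Eq.~\eqref{Hminmax} the left-hand side is $2^{H_{\max}(X_A|E)}$, so this lemma would give $H_{\max}(X_A|E)=\log_2\max_{\delta}F(\rho_A,\delta)=-C_{\min}(\rho_A)$, not $C_{\max}(\rho_A)$ — already the wrong sign/measure. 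It is also simply not true: take $\rho_A=\ket{+}\bra{+}$ with purification $\ket{\psi}_{AE}=\ket{+}_A\ket{e}_E$, so $\rho_{X_AE}=\tfrac{1}{2}I_{X_A}\otimes\ket{e}\bra{e}$; then $\max_{\sigma_E}F(I_{X_A}\otimes\sigma_E,\rho_{X_AE})=2$ (choose $\sigma_E=\ket{e}\bra{e}$), while $\max_{\delta\in\mathcal{I}}F(\ket{+}\bra{+},\delta)=\tfrac12$. The correct fidelity-matching statement lives on the \emph{post-measurement register} $A$ of the tripartite purification $V\ket{\psi}_{AE}$, not on $E$: with $\rho_{X_AA}=M\rho_AM^{\dagger}$, $M=\sum_j\ket{j}_{X_A}\otimes\ket{j}\bra{j}_A$, one has $M^{\dagger}(I_{X_A}\otimes\sigma)M=\sum_j\ket{j}\bra{j}\sigma\ket{j}\bra{j}=\Delta(\sigma)$, which ranges exactly over $\mathcal{I}$, giving $\max_{\sigma}F(\rho_{X_AA},I_{X_A}\otimes\sigma)=\max_{\delta\in\mathcal{I}}F(\rho_A,\delta)$ and the analogous statement for $D_{\max}$ (via isometric invariance). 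Only after this does the duality for the pure state $V\ket{\psi}_{AE}$, $H_{\min}(X_A|E)=-H_{\max}(X_A|A)$ and $H_{\max}(X_A|E)=-H_{\min}(X_A|A)$, deliver $C_{\min}$ and $C_{\max}$ respectively. You mention this duality for the $C_{\max}$ line, but then for $C_{\min}$ you try to reduce the operator inequality $\rho_{X_AE}\le 2^{-\lambda}I_{X_A}\otimes\sigma_E$ "to the inequality defining $D_{\min}$" and invoke a fidelity on $E$ — this mixes up which entropy goes with which divergence ($H_{\min}$ is defined through $D_{\max}$, not through a fidelity) and which system the optimization must be conditioned on, so the argument as written does not close.

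A secondary but genuine error is the justification that "the optimization collapses, because $\rho_{X_AE}$ is block-diagonal in $X_A$, to an optimization that only sees the diagonal of $\rho_A$." The conditional states $\rho_E^i\propto(\bra{i_A}\otimes I_E)\ket{\psi}_{AE}$ are pure, and their mutual overlaps encode precisely the off-diagonal part of $\rho_A$; if $H_{\min}(X_A|E)$ or $H_{\max}(X_A|E)$ depended only on $\Delta(\rho_A)$, then $\ket{+}\bra{+}$ and $I/2$ would have the same $C_{\min}$ and $C_{\max}$, which they do not. Finally, the Uhlmann-freedom difficulty you anticipate is a red herring in the correct argument: the pull-back from $\sigma$ on the conditioning register to an incoherent $\delta$ is done through the explicit isometry $M$ (using invariance of $F$ and $D_{\max}$ under isometries and $F(\rho,\sigma)=F(\rho,\Pi_\rho\sigma\Pi_\rho)$), with no appeal to Uhlmann's theorem at all; the third identity then indeed follows, as you say, from $C_g(\rho)=1-2^{-C_{\min}(\rho)}$.
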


To prove the result, we first introduce the isometry
\begin{equation}
	V=\sum_{i}\ket{i_A}_{X_A}\otimes\ket{i_A}\bra{i_A}\otimes I_E,
\end{equation}
 which maps systems $AE$ to $X_AAE$. Under this map, one can equivalently understand $X_A$ as the measurement outcome and  $A$ as the state after measurement. Then, the dephasing operator can be equivalently expressed as
\begin{equation}
	\Delta_A(\ket{\psi}_{AE}\bra{\psi}_{AE}) = \Tr_A\left[V\ket{\psi}_{AE}\bra{\psi}_{AE}V^{\dagger}\right].
\end{equation}
With this new definition, we first prove the following lemma.

  \begin{lemma}\label{isometry}
\begin{equation}\label{lemmaisometry}
  D_{\max}(\rho||\sigma)=D_{\max}(V\rho V^\dagger||V\sigma V^\dagger)
\end{equation}
for any isometry  $V$ that satisfies $V^\dagger V=I$.
\end{lemma}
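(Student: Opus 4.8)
The plan is to reduce the claim to the elementary fact that conjugation by $V$ is an order isomorphism between operators on $\mathcal{H}$ and operators on the range of $V$. Concretely, I would show that for every real number $\lambda$, the operator inequality $\rho\leq\lambda\sigma$ holds if and only if $V\rho V^\dagger\leq\lambda V\sigma V^\dagger$ holds. Since $D_{\max}(\rho\|\sigma)=\log_2\min\{\lambda\in\mathbb{R}\,|\,\rho\leq\lambda\sigma\}$ is just $\log_2$ of the infimum of the set of admissible $\lambda$'s, the equality of the two sets of admissible $\lambda$'s immediately yields the equality of the two $D_{\max}$ values, including the degenerate case in which no such $\lambda$ exists on either side (equivalently $\operatorname{supp}(\rho)\not\subseteq\operatorname{supp}(\sigma)$), in which case both quantities are $+\infty$.

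For the forward implication I would use the standard fact that $M\geq 0$ implies $VMV^\dagger\geq 0$ for an arbitrary matrix $V$: for any vector $\ket{x}$ one has $\bra{x}VMV^\dagger\ket{x}=\bra{y}M\ket{y}\geq 0$ with $\ket{y}=V^\dagger\ket{x}$. Applying this to $M=\lambda\sigma-\rho\geq 0$ gives $\lambda V\sigma V^\dagger-V\rho V^\dagger\geq 0$, i.e. $V\rho V^\dagger\leq\lambda V\sigma V^\dagger$. For the converse I would conjugate the inequality $V\rho V^\dagger\leq\lambda V\sigma V^\dagger$ by $V^\dagger$ on the left and $V$ on the right, once more using $M\geq 0\Rightarrow V^\dagger MV\geq 0$, to obtain $V^\dagger V\rho V^\dagger V\leq\lambda\,V^\dagger V\sigma V^\dagger V$, and then invoke the isometry condition $V^\dagger V=I$ to collapse both sides to $\rho\leq\lambda\sigma$.

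The only subtlety worth flagging is that $V$ need not be unitary, so $VV^\dagger\neq I$ in general; this is exactly why the two implications must be argued separately and why the hypothesis $V^\dagger V=I$ (rather than anything weaker) is needed in the converse direction. Beyond this there is no real obstacle: once the order-preservation property of conjugation is recorded, the argument is a two-line verification, and the same reasoning will in fact apply verbatim to $D_{\min}$ and the fidelity-based quantities used later, which is presumably why the lemma is isolated here.
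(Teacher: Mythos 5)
Your proposal is correct and follows essentially the same route as the paper: both arguments establish that $M\geq 0$ if and only if $VMV^\dagger\geq 0$ for an isometry $V$ (the nontrivial direction by sandwiching with $V^\dagger\cdot V$ and using $V^\dagger V=I$), apply this to $M=\lambda\sigma-\rho$, and conclude that the admissible sets of $\lambda$ coincide, hence the two $D_{\max}$ values agree. Your explicit remarks on the $+\infty$ case and on why $VV^\dagger\neq I$ forces the two directions to be argued separately are fine additions but do not change the substance.
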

\begin{proof}
  First of all, we prove that $\rho\geq 0$ if and only if $V\rho V^\dagger\geq 0$. The only if part is obvious. For the if part, $\forall \ket{\phi}$, since $V\rho V^\dagger\geq 0$ we have $\bra{\phi}V^\dagger(V\rho V^\dagger)V\ket{\phi}\geq 0$, therefore $\bra{\phi}\rho\ket{\phi}\geq 0$ and we have $\rho\geq 0$.
  \par
  Since
  \begin{equation}\label{dmaxproof}
  \begin{aligned}
    \lambda\sigma-\rho\geq 0&\Leftrightarrow V(\lambda\sigma-\rho)V^\dagger\geq 0\\
    &\Leftrightarrow\lambda V\sigma V^\dagger-V\rho V^\dagger\geq 0,
  \end{aligned}
  \end{equation}
  we conclude that $D_{\max}(\rho||\sigma)=D_{\max}(V\rho V^\dagger||V\sigma V^\dagger)$.
\end{proof}
Notice that the fidelity is also invariant under isometric channel, i.e.
\begin{equation}\label{fidelityisometry}
  F(\rho,\sigma)=F(V\rho V^\dagger,V\sigma V^\dagger)
\end{equation}
where $V^\dagger V=I$.
Now, we prove Theorem \ref{theorem:main}.
\begin{proof}
Suppose $\ket{\psi}_{AE}$ is a pure state such that $\rho_A=\Tr_E[\ket{\psi}_{AE}\bra{\psi}_{AE}]$, then $\rho_{X_AAE}:=V\ket{\psi}_{AE}\bra{\psi}_{AE}V^{\dagger}$ is also a pure state. From the duality between the conditional min and max entropy \cite{konig2009operational}, we have
\begin{equation}\label{duality}
  \begin{aligned}
  &H_{\min}(X_A\vert E)=-H_{\max}(X_A\vert A),\\
  &H_{\max}(X_A\vert E)=-H_{\min}(X_A\vert A).
  \end{aligned}
\end{equation}
We omit the subscript state since all systems are taken from the state $\rho_{X_AAE}$.
Let $M=\sum_{j}\ket{j}_{X_A}\otimes\Pi_j$ where $\Pi_j=\ket{j}\bra{j}_A$, then $\rho_{X_AA}=M\rho_A M^\dagger$ and we have
\begin{equation}\label{hminproof}
  \begin{aligned}
  H_{\min}(X_A\vert E)&=-H_{\max}(X_A\vert A)\\
  &=-\max_{\sigma\in\mathcal{D}(\mathcal{H}_A)}\log_2 F(\rho_{X_AA},I_{X_A}\otimes\sigma)\\
  &=-\max_{\sigma\in\mathcal{D}(\mathcal{H}_A)}\log_2 F(M\rho_A M^\dagger,MM^\dagger I_{X_A}\otimes\sigma MM^\dagger)\\
  &=-\max_{\sigma\in\mathcal{D}(\mathcal{H}_A)}\log_2 F(\rho_A,M^\dagger I_{X_A}\otimes\sigma M)\\
  &=-\max_{\sigma\in\mathcal{D}(\mathcal{H}_A)}\log_2 F\left(\rho_A,\sum_j\Pi_j\sigma\Pi_j\right)\\
  &=-\max_{\delta\in \mathcal{I}}\log_2 F(\rho_A,\delta)\\
  &=\min_{\delta\in\mathcal{I}}D_{\min}(\rho_A||\delta)\\
  &=C_{\min}(\rho_A).
  \end{aligned}
\end{equation}
The third line uses the fact that $F(\rho,\sigma)=F(\rho,\Pi_\rho\sigma\Pi_\rho)$ and the fourth line uses Eq. \eqref{fidelityisometry}. Note that, a similar result of Eq.~\eqref{hminproof} can also be found in \cite{coles2012unification}.

Similarly, we also prove the relation between $C_{\max}$ and conditional max entropy.
\begin{equation}\label{hmaxproof}
  \begin{aligned}
  H_{\max}(X_A\vert E)&=-H_{\min}(X_A\vert A)\\
  &=\min_{\sigma\in\mathcal{D}(\mathcal{H}_A)}D_{\max}(\rho_{X_AA}||I_{X_A}\otimes\sigma)\\
  &=\min_{\sigma\in\mathcal{D}(\mathcal{H}_A)}D_{\max}(M\rho_A M^\dagger||MM^\dagger I_{X_A}\otimes\sigma MM^\dagger)\\
  &=\min_{\sigma\in\mathcal{D}(\mathcal{H}_A)}D_{\max}(\rho_A||M^\dagger I_{X_A}\otimes\sigma M)\\
  &=\min_{\delta\in\mathcal{I}}D_{\max}(\rho_A||\delta)\\
  &=C_{\max}(\rho_A).
  \end{aligned}
\end{equation}
Here, the third line is because $\rho_{X_AA}$ is diagonal and the fourth line follows from Lemma \ref{isometry}. Finally, the proof for $C_g$ follows directly from its relation with $C_{\min}$.
\end{proof}

\subsection{Conditioned on classical information}
Now, we consider the coherence measures $C_f(\rho_A)$ and $C_0(\rho_A)$. In the above analysis, the conditional entropy is conditioned on the quantum information of system $E$. Alternatively, we can also consider the case where system $E$ performs a measurement $M_E$ and obtain an outcome $X_E$. Suppose $M_E$ is a projective measurement on basis $\mathbb{J}_E=\{\ket{j_E}\}_{j_E=0,\cdots, d_E-1}$, then we can similarly define the measurement or the dephasing operator on system $E$ as $\Delta_E(\rho_{AE})=\sum_{j}(I_A\otimes\ket{j_E}\bra{j_E})\rho_{AE}(I_A \otimes \ket{j_E}\bra{j_E})$. The state after the measurements on $\mathbb{I}_A$ and $\mathbb{J}_E$ becomes
\begin{equation}\label{Eq:rhoxaxe}
\begin{aligned}
	\rho_{X_AX_E} &= \Delta_A\left(\Delta_E(\ket{\psi}_{AE}\bra{\psi}_{AE})\right)\\
	& = \sum_{i,j}(\ket{i_A}\bra{i_A}\otimes \ket{j_E}\bra{j_E})\ket{\psi}_{AE}\bra{\psi}_{AE}(\ket{i_A}\bra{i_A}\otimes \ket{j_E}\bra{j_E}),
\end{aligned}
\end{equation}
where the system $X_E$ denotes the outcomes of measurement $M_E$.
Then, we can use the conditional entropy of the classical-classical state $\rho_{X_AX_E}$ to determine the coherence of $\rho_A$. Since $\rho_{X_AX_E}$ depends on the measurement $M_E$ of system $E$, we thus also consider a minimization over all measurements $M_E$. Especially, when using the conditional Von-Neumann entropy, the coherence of formation can be given by \cite{yuan2015intrinsic},
\begin{equation}\label{Eq:formation}
  C_f(\rho_A)=\min_{M_E}H(X_A|X_E)_{\rho_{X_AX_E}},
\end{equation}
where $X_A$ and $X_E$ denote the classical outcome of Alice and Eve's measurement on joint purification state $\ket{\psi}_{AE}$, respectively, and the minimization is over all measurements on system $E$.
Here, we restate the proof of Eq.~\eqref{Eq:formation} {in a more straightforward way than Ref.~\cite{yuan2015intrinsic}}.
\begin{theorem}\label{theorem:formation}
  \begin{equation*}
  C_f(\rho_A)=\min_{M_E}H(X_A|X_E)_{\rho_{X_AX_E}}.
\end{equation*}
\end{theorem}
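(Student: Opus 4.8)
The plan is to use the standard correspondence between measurements on the purifying system $E$ and pure-state decompositions of $\rho_A$ (the Hughston--Jozsa--Wootters, or purification--ensemble, correspondence), and to verify that under this correspondence the classical conditional entropy $H(X_A|X_E)$ is exactly the ensemble-averaged dephasing entropy $\sum_j p_j S(\Delta(\ket{\psi_j}\bra{\psi_j}))$ appearing in the definition of $C_f$.

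First I would compute $H(X_A|X_E)_{\rho_{X_AX_E}}$ for a fixed measurement $M_E$. Expanding the purification in the measurement basis as $\ket{\psi}_{AE}=\sum_j\ket{\tilde\psi_j}_A\otimes\ket{j_E}$ and setting $p_j=\langle\tilde\psi_j|\tilde\psi_j\rangle$, $\ket{\psi_j}=p_j^{-1/2}\ket{\tilde\psi_j}$, one gets that $\rho_A=\sum_j p_j\ket{\psi_j}\bra{\psi_j}$ is a valid decomposition, that the outcome $X_E=j$ occurs with probability $p_j$, and that the $A$-state conditioned on $X_E=j$ is $\ket{\psi_j}\bra{\psi_j}$. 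Hence the distribution of $X_A$ conditioned on $X_E=j$ is $\{|\langle i_A|\psi_j\rangle|^2\}_i$, whose Shannon entropy is $S(\Delta(\ket{\psi_j}\bra{\psi_j}))$. Since $\rho_{X_AX_E}$ is classical--classical, $H(X_A|X_E)=\sum_j p_j H(X_A|X_E=j)=\sum_j p_j S(\Delta(\ket{\psi_j}\bra{\psi_j}))$, so in particular $\min_{M_E}H(X_A|X_E)\ge C_f(\rho_A)$ because the right-hand side is then a minimum over a subset of the decompositions of $\rho_A$.

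For the matching bound I would invoke the converse part of the correspondence: given any pure-state decomposition $\rho_A=\sum_j p_j\ket{\psi_j}\bra{\psi_j}$, there is a measurement on $E$ yielding outcome $j$ with probability $p_j$ and leaving $A$ in $\ket{\psi_j}$, so the value $\sum_j p_j S(\Delta(\ket{\psi_j}\bra{\psi_j}))$ is attained by the corresponding $\rho_{X_AX_E}$. Combining the two directions, the set of attainable values of $H(X_A|X_E)$ over all $M_E$ coincides with $\{\sum_j p_j S(\Delta(\ket{\psi_j}\bra{\psi_j}))\}$ over all pure-state decompositions of $\rho_A$, and taking the infimum of each side yields $\min_{M_E}H(X_A|X_E)=C_f(\rho_A)$.

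The main, and essentially only, subtlety is the bookkeeping in this converse step: realizing an arbitrary decomposition may require enlarging $E$ by a fixed ancilla and/or using a general POVM rather than a projective measurement in the preassigned basis $\mathbb{J}_E$. I would note that appending an ancilla changes neither $\rho_A$ nor $C_f(\rho_A)$, that by Naimark's dilation a POVM on $E$ becomes a projective measurement on the enlarged system, and that all purifications of $\rho_A$ of a given dimension are related by a unitary on $E$ which can be absorbed into the choice of measurement; hence the minimization over all measurements $M_E$ does reach every decomposition (and by Carath\'eodory it in fact suffices to take $\dim\mathcal{H}_E\le d_A^2$). The forward computation of the previous paragraph is then entirely routine.
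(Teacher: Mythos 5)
Your proposal is correct and follows essentially the same route as the paper: compute $H(X_A|X_E)_{\rho_{X_AX_E}}=\sum_j p_j S(\Delta(\ket{\psi_j}\bra{\psi_j}))$ for a fixed measurement on $E$ and then identify the minimization over $M_E$ with the minimization over pure-state decompositions of $\rho_A$. The only difference is that you spell out the converse (attainability of every decomposition via the purification--ensemble correspondence, with ancilla/Naimark bookkeeping), a point the paper compresses into the single remark that different measurements on $E$ correspond to different convex-roof decompositions of $\rho_A$.
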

\begin{proof}
Suppose the initial state shared by A and E is $\ket{\psi}_{AE}$. After performing measurement $M_E$ on system $E$, which is a projective measurement onto the basis $\mathbb{J}_E=\{\ket{j_E}\}$, the state becomes
\begin{equation}
  \begin{aligned}
\rho_{AX_E}=\sum_j p_j \ket{\psi_j}\bra{\psi_j}_A \otimes \ket{j_E}\bra{j_E},
\end{aligned}
\end{equation}
where $\{p_j,\ket{\psi_j}\}$ is an ensemble of state $\rho_A$. Let $\ket{\psi_j}=\sum_i a_{ji} \ket{i_A}$. Then we perform measurement on the computational basis $\mathbb{I}_A=\{\ket{i_A}\}$ which results in the state
\begin{equation}
  \begin{aligned}
\rho_{X_AX_E}=\sum_j  \sum_i p_j|a_{ji}|^2 \ket{i_A}\bra{i_A} \otimes \ket{j_E}\bra{j_E}.
\end{aligned}
\end{equation}
By definition,
\begin{equation}\label{Cfdeduction}
\begin{aligned}
H(X_A|X_E)_{\rho_{X_AX_E}}&=S(\rho_{X_AX_E})-S(\rho_{X_E})\\
&=\sum_{j,i}p_j|a_{ji}|^2\log_2\frac{1}{p_j|a_{ji}|^2}-\sum_jp_j\log_2\frac{1}{p_j}\\
&=\sum_jp_jS(\Delta(\ket{\psi_j}\bra{\psi_j})).
\end{aligned}
\end{equation}
Since a different measurement on system $E$ corresponds to different convex roof decomposition of state $\rho_A$, we conclude that
\begin{equation}\label{cffinal}
\begin{aligned}
  C_f(\rho_A)&=\min_{\{p_j,\ket{\psi_j}\}}\sum_jp_jS(\Delta(\ket{\psi_j}\bra{\psi_j}))\\
  &=\min_{\{p_j,\ket{\psi_j}\}}H(X_A|X_E)_{\rho_{X_AX_E}}\\
  &=\min_{M_E}H(X_A|X_E)_{\rho_{X_AX_E}}.
  \end{aligned}
\end{equation}

\end{proof}

For the coherence measure $C_0$, we also have a similar result with the conditional 0-entropy
\begin{equation}
	H_0(A|B)_{\rho_{AB}}=\max_{\sigma_B}\log_2\Tr\left[\Pi_{\rho_{AB}}(I_A\otimes\sigma_B)\right].
\end{equation}
Here $\Pi_\rho$ denotes projection onto the support of $\rho$.

\begin{theorem}\label{theorem0entropy}
	\begin{equation}\label{c0}
  C_0(\rho)=\min_{M_E}H_0(X_A|X_E)_{\rho_{X_AX_E}}.
\end{equation}
\end{theorem}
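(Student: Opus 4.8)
The plan is to follow the same template as the proof of Theorem~\ref{theorem:formation}, but replace the Von Neumann conditional entropy with the conditional $0$-entropy $H_0(X_A|X_E)$. First I would fix a projective measurement $M_E$ on system $E$ onto a basis $\mathbb{J}_E=\{\ket{j_E}\}$, which produces an ensemble $\{p_j,\ket{\psi_j}\}$ of $\rho_A$ with $\ket{\psi_j}=\sum_i a_{ji}\ket{i_A}$, and then apply the computational-basis measurement on $A$ to obtain the classical-classical state $\rho_{X_AX_E}=\sum_{j,i}p_j|a_{ji}|^2\ket{i_A}\bra{i_A}\otimes\ket{j_E}\bra{j_E}$, exactly as in Eq.~\eqref{Cfdeduction}. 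The key observation is that, because $\rho_{X_AX_E}$ is classical-classical and block-diagonal in $j$, its support projector is $\Pi_{\rho_{X_AX_E}}=\sum_j \left(\sum_{i:\,a_{ji}\neq 0}\ket{i_A}\bra{i_A}\right)\otimes\ket{j_E}\bra{j_E}$, i.e.\ for each $j$ it projects onto the $T_j$ computational-basis vectors on which $\ket{\psi_j}$ has nonzero support.

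Next I would evaluate $H_0(X_A|X_E)_{\rho_{X_AX_E}}=\max_{\sigma_{X_E}}\log_2\Tr[\Pi_{\rho_{X_AX_E}}(I_{X_A}\otimes\sigma_{X_E})]$. Since we may restrict $\sigma_{X_E}$ to be diagonal in the $\mathbb{J}_E$ basis without loss (the trace only sees its diagonal entries), writing $\sigma_{X_E}=\sum_j q_j\ket{j_E}\bra{j_E}$ gives $\Tr[\Pi_{\rho_{X_AX_E}}(I_{X_A}\otimes\sigma_{X_E})]=\sum_j q_j T_j$, which is maximized over the probability simplex at $\max_j T_j$. Hence $H_0(X_A|X_E)_{\rho_{X_AX_E}}=\log_2\max_j T_j=\max_j\log_2 T_j$ for the decomposition induced by $M_E$. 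Then, exactly as in the final display of the proof of Theorem~\ref{theorem:formation}, minimizing over $M_E$ is the same as minimizing over all pure-state ensembles $\{p_j,\ket{\psi_j}\}$ of $\rho_A$ (every ensemble arises from some projective measurement on a purification), so
\begin{equation*}
\min_{M_E}H_0(X_A|X_E)_{\rho_{X_AX_E}}=\min_{\{p_j,\ket{\psi_j}\}}\max_j\log_2 T_j=C_0(\rho),
\end{equation*}
which is the claimed identity.

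I expect the only subtle point — and the main thing to state carefully rather than the main obstacle — to be the claim that the optimal $\sigma_{X_E}$ may be taken diagonal and that the resulting optimum is $\max_j T_j$; this needs the remark that $\Tr[\Pi(I\otimes\sigma)]$ depends only on the diagonal of $\sigma$ in the $\mathbb{J}_E$ basis because $\Pi$ is itself block-diagonal there. A second point worth a sentence is the equivalence ``projective measurements $M_E$ on a purification $\leftrightarrow$ pure-state ensembles of $\rho_A$'' (the HJW/GHJW theorem), which is already implicitly used in Theorem~\ref{theorem:formation} and so can simply be invoked. Everything else is a direct computation parallel to the $C_f$ case, so no genuine difficulty arises; the proof is short.
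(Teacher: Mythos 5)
Your proposal is correct and follows essentially the same route as the paper's proof: dephase on both sides to get the classical-classical state, identify the support projector, observe that $\Tr[\Pi_{\rho_{X_AX_E}}(I\otimes\sigma_{X_E})]$ depends only on the diagonal weights of $\sigma_{X_E}$ so the optimum is $\max_j T_j$, and then identify the minimization over $M_E$ with the minimization over pure-state decompositions of $\rho_A$. The only cosmetic difference is that you restrict $\sigma_{X_E}$ to diagonal form up front while the paper writes a general $\sigma_E$ and notes that only its diagonal entries $b_{jj}$ survive the trace; these are the same computation.
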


\begin{proof}
Similar to the proof of Theorem \ref{theorem:formation}, we perform a measurement $M_E$ on system $E$ which leads to a state
\begin{equation}
  \begin{aligned}
\rho_{AX_E}=\sum_j p_j \ket{\psi_j}\bra{\psi_j}_A \otimes \ket{j_E}\bra{j_E},
\end{aligned}
\end{equation}
where $\{p_j,\ket{\psi_j}\}$ is an ensemble of state $\rho_A$. Let $\ket{\psi_j}=\sum_i a_{ji} \ket{i_A}$. Then we perform measurement on the computational basis $\mathbb{I}_A=\{\ket{i_A}\}$ which results in the classical-classical state
\begin{equation}
  \begin{aligned}
\rho_{X_AX_E}=\sum_j  \sum_i p_j|a_{ji}|^2 \ket{i_A}\bra{i_A} \otimes \ket{j_E}\bra{j_E}.
\end{aligned}
\end{equation}

According to the definition of conditional 0-entropy,

\begin{equation}
  \begin{aligned}
  \min_{M_E} H_0(A|E)_{\rho_{X_AX_E}}&=\min_{M_E} \max_{\sigma_E }\{-D_0(\rho_{X_AX_E}||I_A\otimes \sigma_E)\}\\
&=\min_{M_E} \max_{\sigma_E } \log_2 \mathrm{Tr}[\Pi_{\rho_{X_AX_E}} (I_A\otimes\sigma_E))]
\end{aligned}
\end{equation}
where $\Pi_{\rho_{X_AX_E}}$ is the projection onto the support of $\rho_{X_AX_E}$ which can be written as
\begin{equation}
\Pi_{\rho_{X_AX_E}}=\sum_{j,i}\mathds{1}[a_{ji}\neq 0] \ket{i_A}\bra{i_A} \otimes \ket{j_E}\bra{j_E}
\end{equation}
where $\mathds{1}[X]=1$ if $X$ is true, and 0 otherwise.
Any state $\sigma_E\in \mathcal{D}(\mathcal{H}_E)$ can be expressed in the basis $\{\ket{j_E}\}$ as
\begin{equation}
  \begin{aligned}
\sigma_E=\sum_{j,j'} b_{jj'} \ket{j_E}\bra{j_E'}.
\end{aligned}
\end{equation}
Thus

\begin{equation}
  \begin{aligned}
&\mathrm{Tr}[\Pi_{\rho_{X_AX_E}} (I_A\otimes\sigma_E)]\\
&=\mathrm{Tr}\left[\sum_{j,j',i}b_{jj'}\mathds{1}[a_{ji}\neq 0] \ket{i_A}\bra{i_A} \otimes \ket{j_E}\bra{j_E'}\right]\\
&=\sum_j b_{jj} T_j,
\end{aligned}
\end{equation}
where $T_j=\sum_i\mathds{1}[a_{ji}\neq 0]$.
Then we know that $\max_{\sigma_E } \log_2 \mathrm{Tr}(\Pi_{\rho_{AE}} (I_A\otimes\sigma_E))=\max_j\log_2 T_j$. Also, since the measurement chosen by system E corresponds to an ensemble of state $\rho_A$, we have
\begin{equation}\label{C0deduction}
  \begin{aligned}
  C_{0}(\rho)&=\min_{\{p_j,\ket{\psi_j}\}} \max_j \log_2 T_j\\
  &=\min_{M_E} \max_j \log_2 T_j\\
  &=\min_{M_E} H_0(A|E)_{\rho_{X_AX_E}}.
  \end{aligned}
\end{equation}
\end{proof}




\section{Computing the coherence measure via SDP}\label{Sec:SDP}

By definition, we can see that the relative entropy of coherence $C_r(\rho)$ can be easily computed. However, none of the other five coherence measures $C_g$, $C_{\max}$, $C_{\min}$,  $C_f$, and $C_0$ can be directly computed from the definition as they all involve a minimization procedure.  In literature, an upper and lower bound of the geometric coherence measure is proposed in Ref.~\cite{zhang2017estimation}.  In this work, we focus on numerical calculations of $C_g$, $C_{\max}$, $C_{\min}$ based on the unified definition in the last section. As $C_f$ and $C_0$ are defined {by conditioning on classical information, whether they can be efficiently computed} is still left as an open problem.

The main reason that we can efficiently compute $C_g$, $C_{\max}$, $C_{\min}$ is based on the fact that the conditional max/min entropy $H_{\min}(A|B)_{\rho_{AB}}$ and $H_{\max}(A|B)_{\rho_{AB}}$  can be efficiently computed via semi-definite programming (SDP) \cite{tomamichel2012framework}, that is,
\begin{equation}\label{hminsdp}
\begin{aligned}
  H_{\min}(A|B)_{\rho_{AB}} &= -\log_2OPT\\
  \text{where } OPT &=  \min \Tr[\sigma_B]\\
  \text{s.t. }& I_A\otimes \sigma_B\geq \rho_{AB}\\
  & \sigma_B\geq 0
\end{aligned}
\end{equation}

\begin{equation}\label{hmaxsdp}
\begin{aligned}
  H_{\max}(A|B)_{\rho_{AB}} &= \log_2 OPT\\
  \text{where }OPT &= \min\mu\\
  \text{s.t. }& \mu I_B\geq \Tr_A[\sigma_{AB}]\\
  &\sigma_{AB}\otimes I_C\geq \rho_{ABC}\\
  &\sigma_{AB}\geq 0\\
  & \mu\geq 0
\end{aligned}
\end{equation}
In Eq. \eqref{hmaxsdp}, $\rho_{ABC}$ is an arbitrary purification of $\rho_{AB}$.

Our result, Theorem 3, bridges the coherence measures and the conditional max/min entropies. The numerical method to compute $C_{\min}(\rho)$ and $C_{\max}(\rho)$ by SDP is described as follows.

\begin{equation}\label{cminsdp}
\begin{aligned}
  C_{\min}(\rho) &= H_{\min}(X_A|E)_{\rho_{X_{A}E}}\\
  \text{where }&\rho_{AE}=\mathrm{Purification}(\rho)\\
  & \rho_{X_{A}E} = \Tr_A[V\rho_{AE}V^{\dagger}]\\
  & V=\sum_{i}\ket{i_A}_{X_A}\otimes\ket{i_A}\bra{i_A}\otimes I_E
\end{aligned}
\end{equation}

\begin{equation}\label{cmaxsdp}
\begin{aligned}
  C_{\max}(\rho) &= H_{\max}(X_A|E)_{\rho_{X_{A}E}}\\
  \text{where }&\rho_{AE}=\mathrm{Purification}(\rho)\\
  & \rho_{X_{A}E} = \Tr_A[V\rho_{AE}V^{\dagger}]\\
  & V=\sum_{i}\ket{i_A}_{X_A}\otimes\ket{i_A}\bra{i_A}\otimes I_E
\end{aligned}
\end{equation}
Note that the geometric coherence is related to the min-entropy of coherence as $C_g(\rho)=1-2^{-C_{\min}(\rho)}$, so it can be computed via Eq. \eqref{cminsdp} as well.

Here we show several examples of the calculation of different coherence measures. First, we consider a type of symmetric mixed state in qubit case and high dimension case, $\rho=\nu\ket{+}\bra{+}+(1-\nu)\frac{I}{2}$ and $\rho=\nu(\ket{+}\bra{+})^{\otimes 3}+(1-\nu)\frac{I^{\otimes 3}}{8}$, respectively. The comparisons of different coherence measures, $C_{\max}(\rho),C_r(\rho),C_{\min}(\rho),C_g(\rho)$, are shown in
Fig.~\ref{example1} and Fig.~\ref{example2}. The order of the coherence measures as proven in Theorem~\ref{relationship} is also clearly shown.

\begin{figure}[t]
\centering
{\resizebox{10cm}{!}{\includegraphics{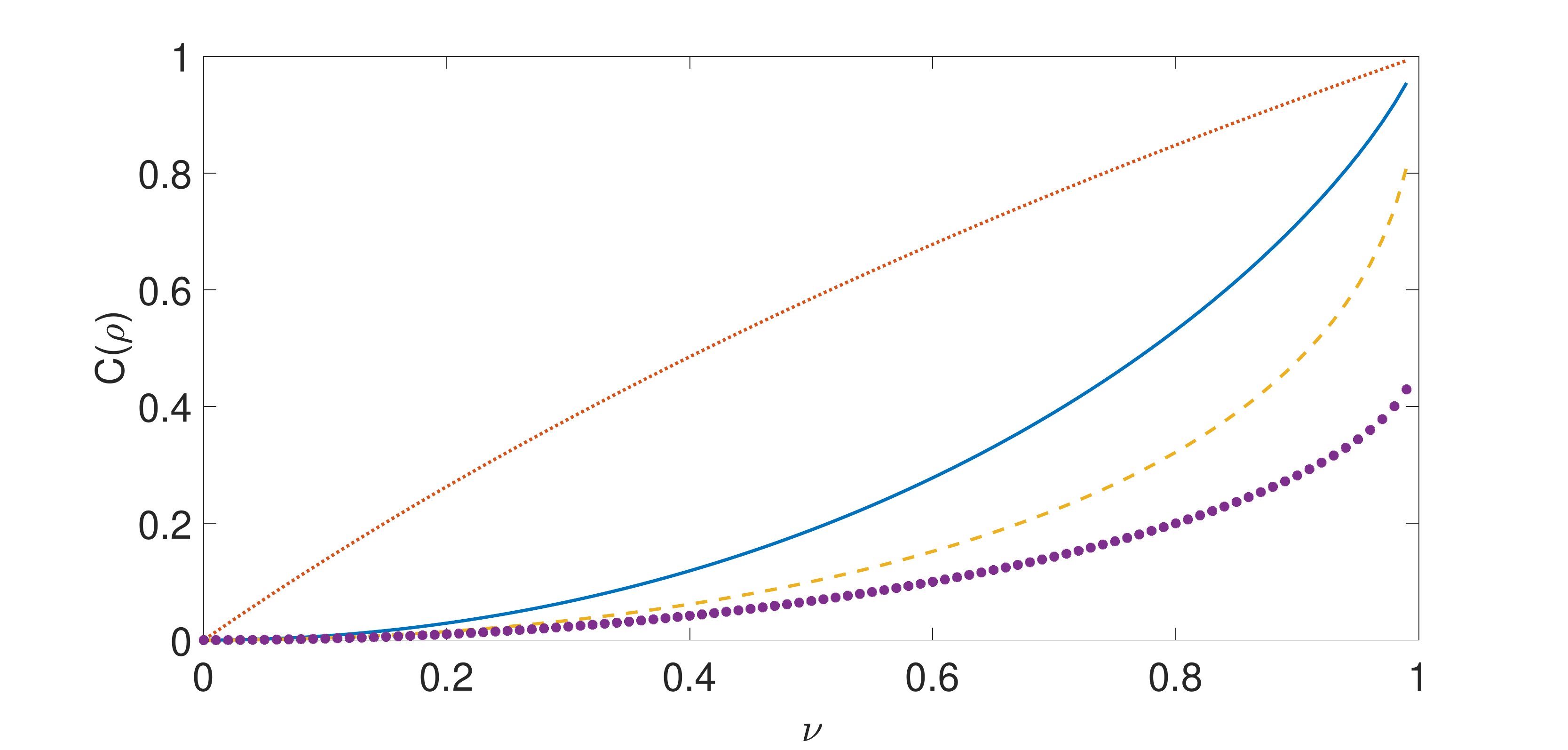}}}
\caption{Coherence of the state $\rho=\nu\ket{+}\bra{+}+(1-\nu)\frac{I}{2}$. From above to below, the lines correspond to $C_{\max}(\rho),C_r(\rho),C_{\min}(\rho),C_g(\rho)$, respectively.}\label{example1}
\end{figure}

\begin{figure}[t]
\centering
{\resizebox{10cm}{!}{\includegraphics{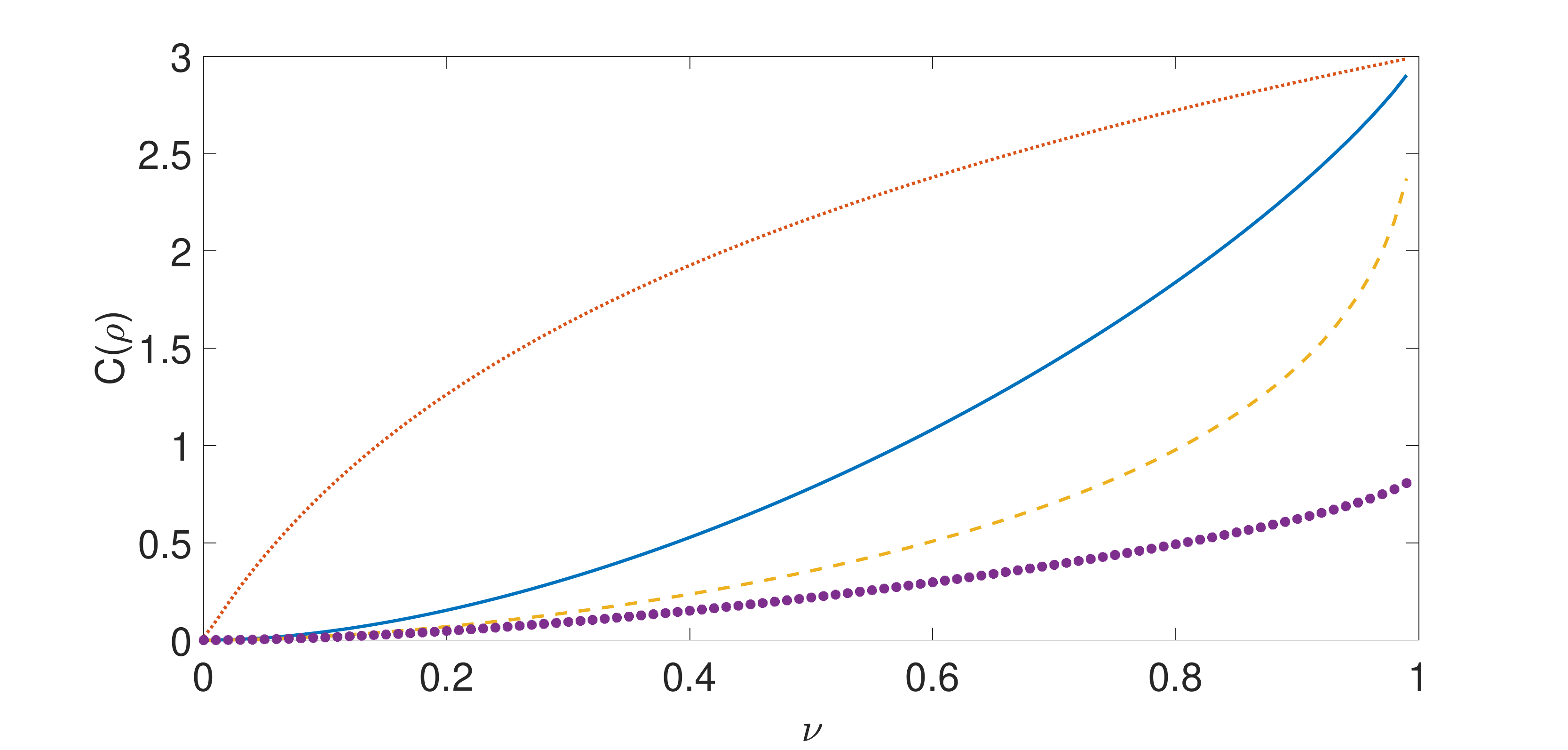}}}
\caption{Coherence of the state $\rho=\nu(\ket{+}\bra{+})^{\otimes 3}+(1-\nu)\frac{I^{\otimes 3}}{8}$. From above to below, the lines correspond to $C_{\max}(\rho),C_r(\rho),C_{\min}(\rho),C_g(\rho)$, respectively.}
\label{example2}
\end{figure}

Next we plot the coherence measures for all qubit states. A qubit $\rho=\frac{I+\vec{n}\cdot\vec{\sigma}}{2}$ is specified by its coordinate $\vec{n}=(n_x,n_y,n_z)$ on the Bloch sphere where $\vec{\sigma}=(X,Y,Z)$ are Pauli matrices. We can also describe the coordinate by longitude $\phi$, latitude $\theta$, and distance to the z-axis $r$. However, since coherence is defined on z-basis, the x and y basis are symmetric, meaning that two states have the same amount of coherence if only their longitude are different. So we can use two parameters $\beta,\gamma$ to represent a state where $\theta = \beta\pi$ is the latitude and $\gamma=\frac{r}{\sin\theta}$ is the normalized distance to the z-axis. Here both parameters $\beta,\gamma$ are in the range $[0,1]$. We let the longitude $\phi=0$, then a qubit is specified by $\beta,\gamma$ as
\begin{equation}\label{representation}
  \vec{n}=\left(\gamma\sin(\beta\pi),0,\cos(\beta\pi)\right).
\end{equation}
The results are presented in Fig. \ref{allqubit} where we can also see and compare the four coherence measures.

\begin{figure}[t]
\subfigure[]{\includegraphics[height=4.2cm]{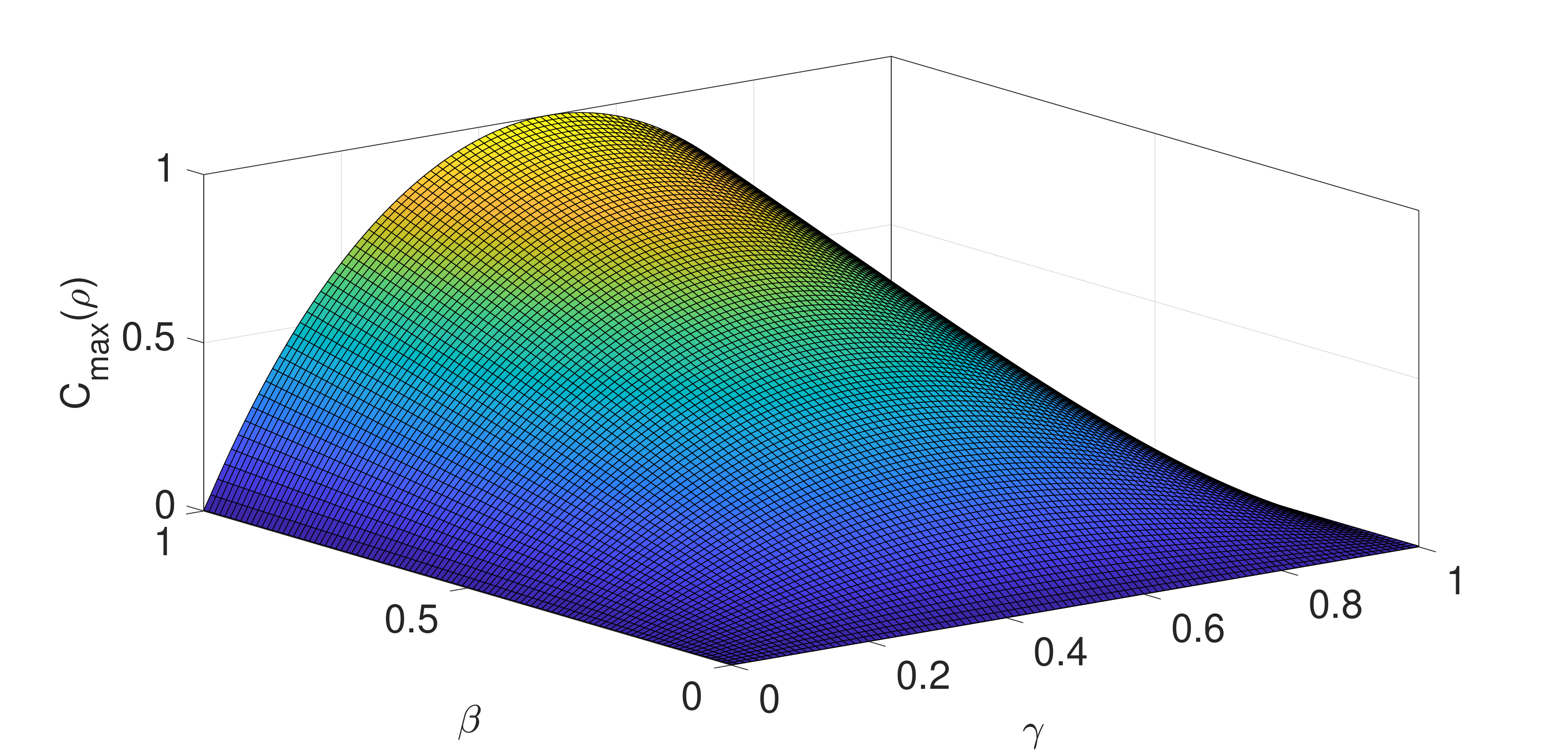}}
\subfigure[]{\includegraphics[height=4.2cm]{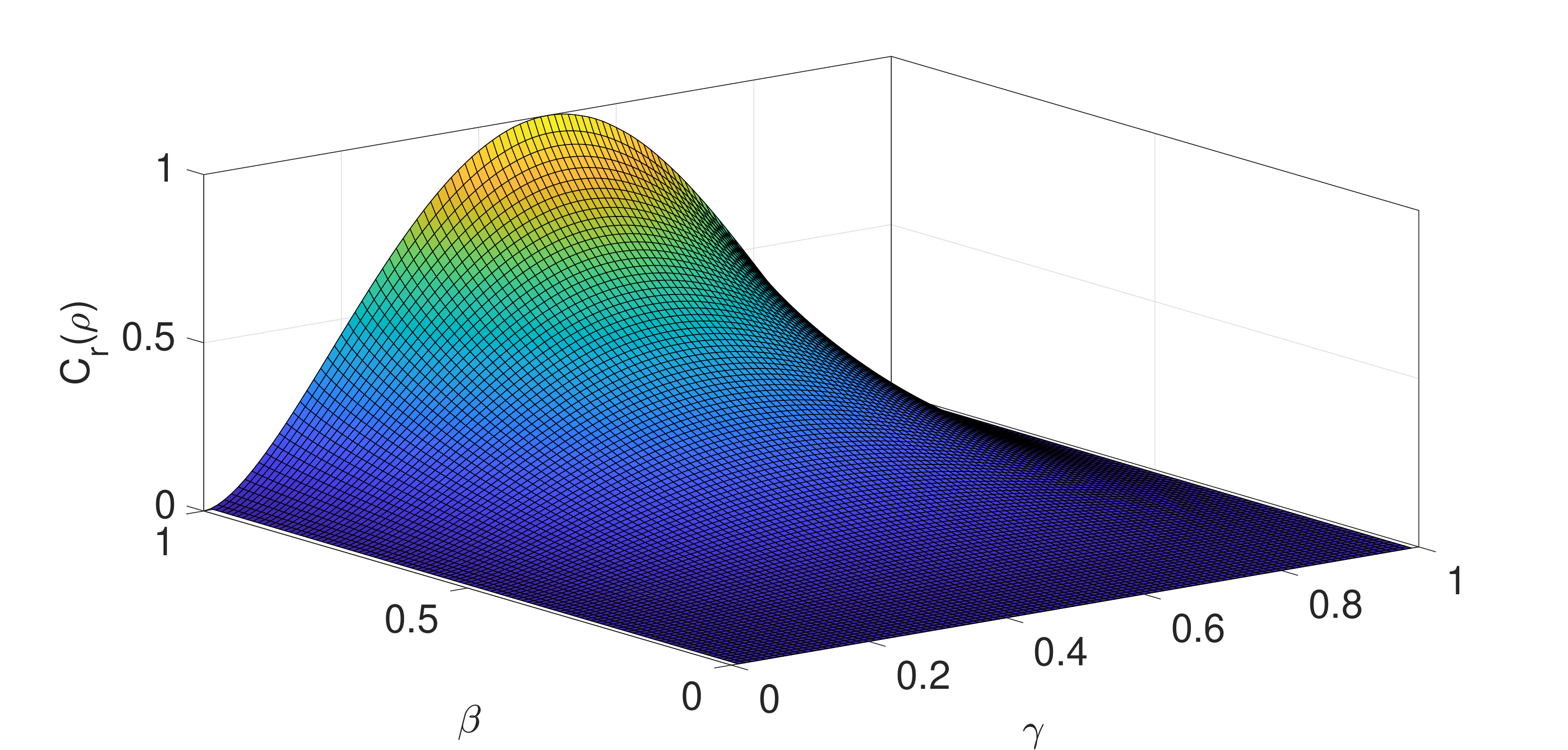}}
\subfigure[]{\includegraphics[height=4.2cm]{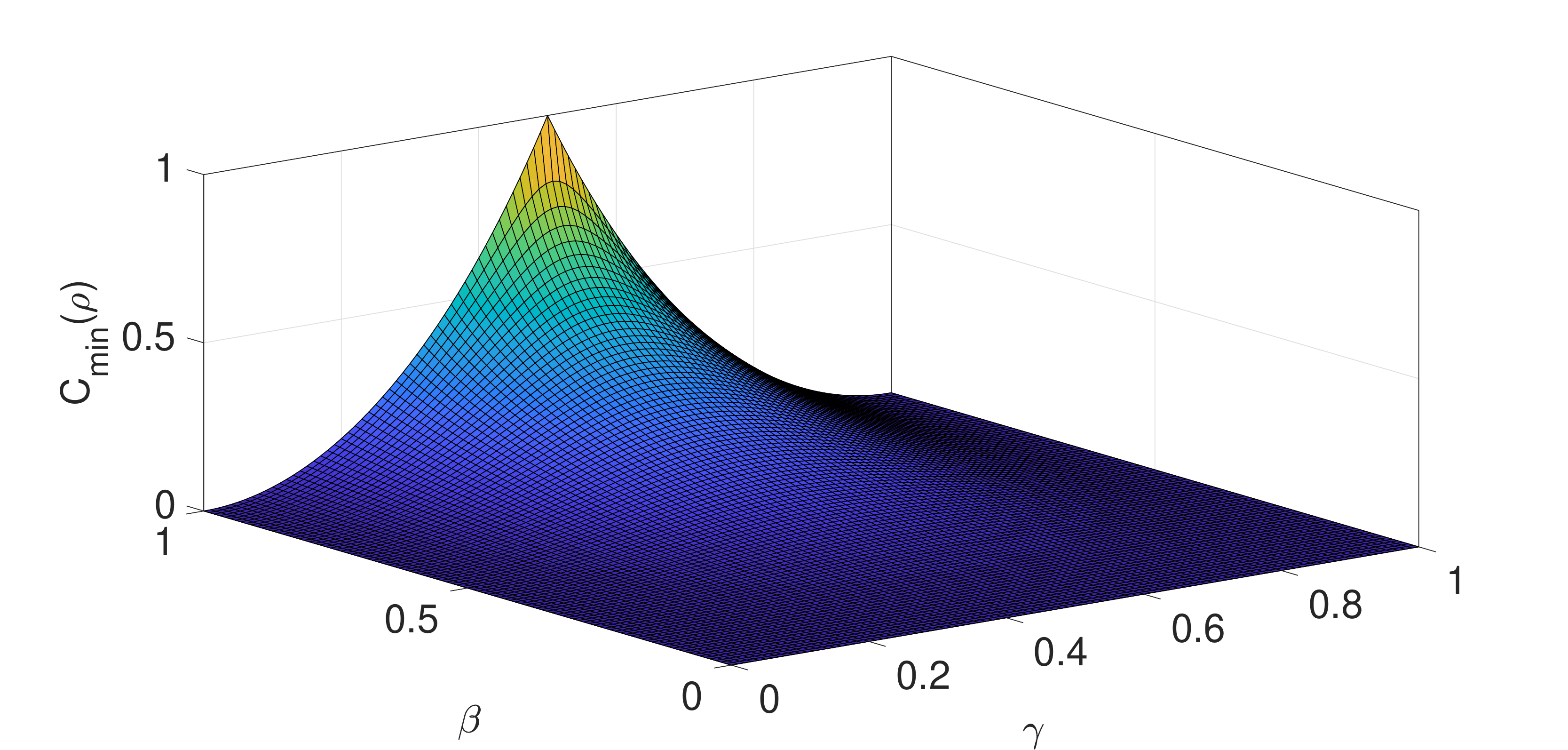}}
\subfigure[]{\includegraphics[height=4.2cm]{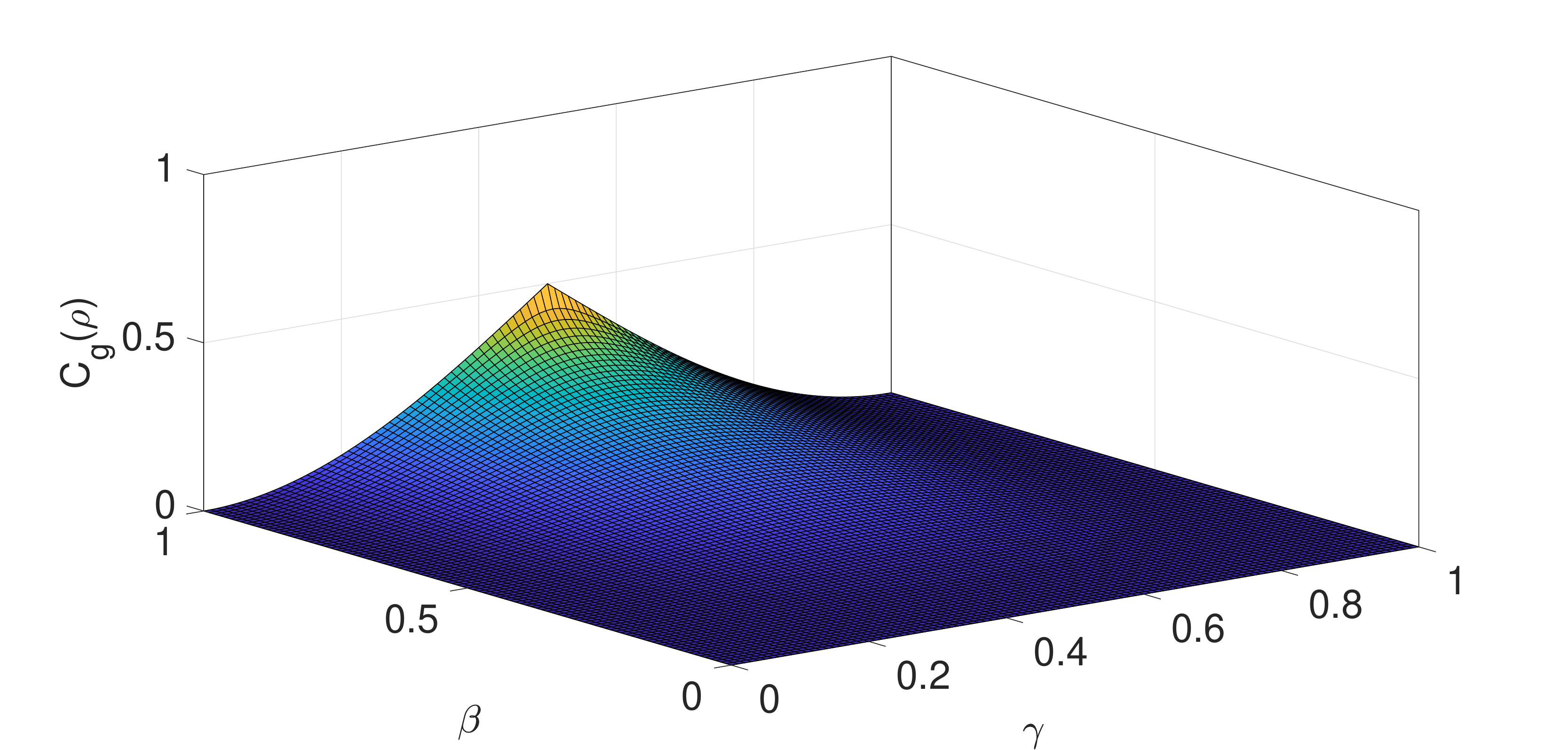}}
\caption{Coherence measures (a) $C_{\max}(\rho)$, (b) $C_{r}(\rho)$, (c) $C_{\min}(\rho)$, and (d) $C_{g}(\rho)$ for all qubit states.}\label{allqubit}
\end{figure}

\section{Operational meaning}\label{Sec:operation}

In this section, we study the operational meaning of the coherence measures that are defined in the unified way via general conditional entropies.
The quantum coherence of $\rho_A$ characterizes the unpredictable randomness of the state. That is, considering any adversary that is entangled with $\rho_A$, she has certain uncertainty, quantified by coherence measures, about the internal property of $\rho_A$ in the coherence basis. This interpretation can be understood from the quantum conditional entropies, where the conditional entropy of $A$ conditioned on $E$ characterizes some kind of uncertainty of $E$ with respective to the information of the system $A$. Here we show that beyond the conventional conditional Von-Neumann entropy, some of the generalized conditional entropies on a classical-quantum state defined in Eq.~\eqref{Eq:rhoxe} also have strong operational meanings.

{Rewrite Eq. \eqref{Eq:rhoxe} as follows,
\begin{equation}\label{Eq:rhoxecqstate}
\begin{aligned}
  \rho_{X_{A}E} &= \sum_{i}(\ket{i_A}\bra{i_A}\otimes I_E)\ket{\psi}_{AE}\bra{\psi}_{AE}(\ket{i_A}\bra{i_A}\otimes I_E)\\
    & = \sum_{i}p_i\ket{i_A}\bra{i_A}\otimes\rho_E^i,
\end{aligned}
\end{equation}
where $p_i=\Tr\left[\ket{\psi}_{AE}\bra{\psi}_{AE}(\ket{i_A}\bra{i_A}\otimes I_E)\right]$ and $\rho_E^i=(\bra{i_A}\otimes I_E)\ket{\psi}_{AE}\bra{\psi}_{AE}(\ket{i_A}\otimes I_E)/p_i$.}
On the one hand, as described in Ref.~\cite{konig2009operational}, the conditional min-entropy $H_{\min}(X_A\vert E)_{\rho_{X_{A}E}}$ of a classical-quantum state $\rho_{X_{A}E}$ corresponds to the guessing probability $p_{guess}(X_A\vert E)$,
\begin{equation}
H_{\min}(X_A\vert E)_{\rho_{X_{A}E}}=-\log_2 p_{\textrm{guess}}(X_A\vert E).
\end{equation}
Here $p_{\textrm{guess}}(X_A\vert E)$ is the maximum probability that Eve could guess $X_A$ correctly according to her system with an optimal measurement $M_E=\{E_i\}$,
\begin{equation}
p_{\textrm{guess}}(X_A\vert E)=\max_{\{E_i\}}\sum_i p_i\Tr[E_i \rho_E^i].
\end{equation}
As $C_{\min}(\rho) = H_{\min}(X_A\vert E)_{\rho_{X_{A}E}}$ with $\rho_{X_{A}E}$ being the dephased state defined in Eq.~\eqref{Eq:rhoxe}, we thus derive an operational meaning of the coherence measure $C_{\min}(\rho)$. Note that a similar argument is also discovered in  \cite{coles2012unification}.

On the other hand, the conditional max-entropy $H_{\max}(X_A\vert E)_{\rho_{X_{A}E}}$ of a classical-quantum state $\rho_{X_{A}E}$ corresponds to the security of $X_A$ when used as a secret key in the presence of adversary $E$,
\begin{equation}
H_{\max}(X_A\vert E)_{\rho_{X_{A}E}}=\log_2 p_{\textrm{secr}}(X_A\vert E).
\end{equation}
Here the security of the secret key can be quantified by the maximum fidelity between $\rho_{X_{A}E}$ and $I_{X_A}/|X_A|\otimes \rho_E$ where $I_{X_A}/|X_A|$ is the maximally mixed state which corresponds to the uniform distribution and $|X_A|$ is the alphabet size of $X_{A}$,
\begin{equation}
p_{\textrm{secr}}(X_A\vert E)=|X_A| \max_{\sigma_E}F(\rho_{X_{A}E},I_{X_A}/|X_A|\otimes \sigma_E).
\end{equation}
Here the state  $I_{X_A}/|X_A|\otimes \sigma_E$ means that the key is uniformly distributed and independent of Eve's system. As $H_{\max}(X_A\vert E)_{\rho_{X_{A}E}}$ is related to $C_{\max}(\rho)$, we thus give an operational meaning to $C_{\max}(\rho)$.


The operational meaning of the coherence measures $C_{\min}(\rho)$ and $C_{\max}(\rho)$  are both built on that the adversary Eve is a \emph{quantum adversary} which possess a quantum system that shares entanglement with Alice's state. In general, coherence measures that belong to this framework can be expressed as
\begin{equation}\label{quantumadversary}
  C_{\alpha}(\rho_A)=H_{\alpha}(X_A|E)_{\rho_{X_AE}}
\end{equation}
where $H_{\alpha}$ denotes general conditional entropy and $C_{\alpha}$ denotes the corresponding coherence measure. We expect similar results can be found for other $\alpha$.

Instead, we can also consider Eve to be a \emph{classical adversary} and build the corresponding framework of conditional entropy with Eve's system being a classical random variable. In this case, we require that Eve  performs a measurement $M_E$ {on her local basis $\{\ket{j_E}\}$} and infers Alice's information from her own measurement result. {Following Eq. \eqref{Eq:rhoxaxe}, after both party's measurement, the resulting state is a classical-classical state 
\begin{equation}
  \rho_{X_AX_E}=\sum_j  \sum_i p_j|a_{ji}|^2 \ket{i_A}\bra{i_A} \otimes \ket{j_E}\bra{j_E},
\end{equation}
where $p_j=\Tr\left[\ket{\psi}_{AE}\bra{\psi}_{AE}(I_A\otimes \ket{j_E}\bra{j_E})\right]$ and $a_{ji} = (\bra{i_A}\otimes\bra{j_E})\ket{\psi}_{AE}/\sqrt{p_j}$.} In the classical adversary framework, coherence measures are expressed as
\begin{equation}\label{classicaladversary}
  C_{\alpha}(\rho_A)=\min_{M_E}H_{\alpha}(X_A|X_E)_{\rho_{X_AX_E}}.
\end{equation}
Here the minimization over $M_E$ denotes the optimal strategy of Eve's measurement which also corresponds to the decomposition of Alice's state. This is why coherence measures studied in this work that belong to the classical adversary framework all have the convex roof form, including $C_f(\rho)$ and $C_0(\rho)$. The measure $C_f(\rho)$  is regarded as the intrinsic randomness of the state against classical adversary \cite{yuan2015intrinsic}. Meanwhile $H_{0}(X_A|X_E)_{\rho_{X_AX_E}}$ is related to $C_0(\rho)$ which is a classical version of hypothesis testing \cite{wang2012one}. We also note that, since classical adversary is \emph{weaker} than quantum adversary, the uncertainty of a classical adversary should be larger, meaning that coherence measures with the convex roof form are often larger than the ones with distance form. This is also suggested in Theorem \ref{relationship}. Besides the coherence measures studied in this work, whether other coherence measures possess operational meanings in the conditional entropy framework is a remained and interesting open question.


\section{Discussion}\label{Sec:summary}
In this work we show the connections between coherence measures and generalized quantum conditional entropies. Our result highlights the close relation between single partite coherence and bipartite quantum correlation in a classical-quantum state. {When represented in the conditional entropy form,} some of the coherence measures can be efficiently calculated by semi-definite programming. Meanwhile, besides the operational meanings given by expressing as conditional entropies, some of the coherence measures based on the $\alpha$-Renyi divergence also have strong operational meanings from the perspective of quantum resource theory of coherence, such as coherence dilution and distillation, in both the asymptotic and one-shot cases \cite{Winter16,zhao2017one, 2017arXiv171110512R}. For a general quantum state, the gap between the coherence dilution and distillation ratios can be also calculated by our method. The result will pave the way for studying the reversible properties of coherence resource under different incoherent operations, such as the maximally incoherent operation  \cite{aberg2006quantifying}, the dephasing-covariant incoherent operations \cite{Chitambar16prl,Marvian16}, the incoherent operation \cite{baumgratz2014quantifying} and the strictly incoherent operation \cite{Winter16}.

Besides the examples shown in this work, the relationship between other $\alpha$-Renyi divergence coherence monotones and quantum conditional entropies is still an open question.
Moreover, through the connection of generalized quantum conditional entropies, our work can provide links from the resource theory of coherence to other quantum information processing tasks, such as entanglement cost and distillation \cite{Buscemi2011prlentanglement,Buscemi2010distill}, randomness extraction \cite{ma2017source,hayashi2017secure}, and security analysis of quantum key distribution \cite{Devetak2005distill,renner2008security}. These are directions for future work.

\section{Acknowledgements}
We acknowledge X.~M. for the insightful discussions. This work was supported by the National Natural Science Foundation of China Grants No. 11674193.

All authors contributed equally to this work.

\appendix
\section{Coherence distillation and dilution}
{The asymptotic distillation rate under general incoherent operation is defined as
\begin{equation}
	C_{\textrm{distillation}}^{\infty}(\rho) = \sup R, \textrm{s.t.}, \rho^{\otimes n}\overset{\textrm{O}}{\rightarrow}\overset{\varepsilon}{\approx}\ket{\psi_2}^{\otimes nR} \textrm{ as } n\rightarrow\infty, \varepsilon\rightarrow 0^+.
\end{equation}

The asymptotic dilution rate under general incoherent operation is defined as
\begin{equation}
	C_{\textrm{dilution}}^{\infty}(\rho) = \inf R, \textrm{s.t.}, \ket{\psi_2}^{\otimes nR}\overset{\textrm{O}}{\rightarrow}\overset{\varepsilon}{\approx}\rho^{\otimes n} \textrm{ as } n\rightarrow\infty, \varepsilon\rightarrow 0^+.
\end{equation}

The one-shot distillation rate with $\varepsilon$ error under general incoherent operation is defined as
\begin{equation}
	C_{\textrm{distilation}}^{\varepsilon}(\rho) = \sup \log_2d, \textrm{s.t.}, \rho\overset{\textrm{O}}{\rightarrow}\overset{\varepsilon}{\approx} \ket{\psi_d}.
\end{equation}

The one-shot dilution rate with $\varepsilon$ error under general incoherent operation is defined as
\begin{equation}
	C_{\textrm{dilution}}^{\varepsilon}(\rho) = \inf \log_2d, \textrm{s.t.}, \ket{\psi_d}\overset{\textrm{O}}{\rightarrow}\overset{\varepsilon}{\approx}\rho.
\end{equation}

Here, $\rho\overset{\varepsilon}{\approx}\sigma$ means that $F(\rho,\sigma)\le 1-\varepsilon$, $\ket{\psi_d}=\frac{1}{\sqrt{d}}\sum_{i=0}^{d-1}\ket{i}$ is the (canonical) $d$ dimensional maximally coherent state, and $O$ can be an arbitrary class of incoherent operation, such as IO and MIO. For simplicity, we only give the results for exact one-shot coherence transformation, that is, we define $C_{\textrm{distilation}}^{\textrm{one-shot}}(\rho)=\lim_{\varepsilon\to 0^+}C_{\textrm{distilation}}^{\varepsilon}(\rho)$ and $C_{\textrm{dilution}}^{\textrm{one-shot}}(\rho)=\lim_{\varepsilon\to 0^+}C_{\textrm{dilution}}^{\varepsilon}(\rho)$. Under IO, we have
\begin{equation}
\begin{aligned}
		C_{\textrm{distillation}}^{\infty}(\rho) &= C_r(\rho),\\	C_{\textrm{dilution}}^{\infty}(\rho) &= C_f(\rho), \\
			C_{\textrm{dilution}}^{\textrm{one-shot}}(\rho)&\approx C_{\textrm{min}}(\rho),\\
				C_{\textrm{dilution}}^{\textrm{one-shot}}(\rho)&=C_0(\rho).\\
\end{aligned}
\end{equation}

Under MIO, we have
\begin{equation}
\begin{aligned}
C_{\textrm{distillation}}^{\infty}(\rho) &= C_r(\rho),\\	C_{\textrm{dilution}}^{\infty}(\rho) &= C_r(\rho), \\
C_{\textrm{dilution}}^{\textrm{one-shot}}(\rho)&\approx C_{\textrm{max}}(\rho).\\
\end{aligned}
\end{equation}
We refer to Ref.~\cite{Winter16,zhao2017one} for more information.

\section{Proof of Theorem 1}
Here is the proof of Theorem 1.
\begin{proof}

(C1) Let $C_{\min}(\rho)=D_{\min}(\rho||\delta^\ast)$. Using the property of fidelity, that is, $F(\rho,\sigma)=1$ iff $\rho=\sigma$, we know that
\begin{equation}\label{minc1}
\begin{aligned}
C_{\min}(\rho)=0&\Leftrightarrow \exists \delta\in\mathcal{I},\ \rho=\delta\\
&\Leftrightarrow\rho\in\mathcal{I}.
\end{aligned}
\end{equation}
\par
(C2) To prove the monotonicity property, we notice that  $D_{\min}(\rho||\sigma)$ is non-increasing under quantum channel. Thus, we have
\begin{equation}\label{minc2a}
  \begin{aligned}
  C_{\min}(\rho)&=D_{\min}(\rho||\delta^\ast)\\
  &\geq D_{\min}\left(\sum_nK_n\rho K_n^\dagger\Big|\Big|\sum_nK_n\delta^\ast K_n^\dagger\right)\\
  &\geq\min_{\delta\in\mathcal{I}}D_{\min}\left(\sum_nK_n\rho K_n^\dagger\Big|\Big|\delta\right)\\
  &=C_{\min}\left(\sum_nK_n\rho K_n^\dagger\right).
  \end{aligned}
\end{equation}
\par
(C4) To prove the convexity property, we make use of the joint concavity of the square root of the fidelity,
\begin{equation}\label{concavityfidelity}
  \sqrt{F}\left(\sum_np_n\rho_n,\sum_np_n\sigma_n\right)\geq \sum_np_n\sqrt{F}(\rho_n,\sigma_n).
\end{equation}
Let $C_{\min}(\rho_n)=-\log_2F(\rho_n,\delta_n^\ast)$, then we have
\begin{equation}\label{minc3}
  \begin{aligned}
  \sum_np_nC_{\min}(\rho_n)&=-2\sum_np_n\log_2\sqrt{F}\left(\rho_n,\delta_n^\ast\right)\\
  &\geq -2\log_2\left(\sum_np_n\sqrt{F}\left(\rho_n,\delta_n^\ast\right)\right)\\
  &\geq -2\log_2\left(\sqrt{F}\left(\sum_np_n\rho_n,\sum_np_n\delta_n^\ast\right)\right)\\
  &=D_{\min}\left(\sum_np_n\rho_n\Big|\Big|\sum_np_n\delta_n^\ast\right)\\
  &\geq\min_{\delta\in\mathcal{I}}D_{\min}\left(\sum_np_n\rho_n\Big|\Big|\delta\right)\\
  &=C_{\min}\left(\sum_np_n\rho_n\right).
  \end{aligned}
\end{equation}
\end{proof}

\bibliographystyle{iopart-num}
\bibliography{refsdpcoherence3}

\end{document}